\newcommand{\typeof}{1}
\newcommand{\condinc}[2]{\ifthenelse{\equal{\typeof}{0}}{#1}{#2}}
\newcommand{\funone}{\mathbf{f}}
\newcommand{\funtwo}{\mathbf{g}}
\newcommand{\funthree}{\mathbf{h}}
\newcommand{\funfour}{\mathbf{p}}
\newcommand{\tfunone}{\mathsf{f}}
\newcommand{\tfuntwo}{\mathsf{g}}
\newcommand{\tfunthree}{\mathsf{h}}
\newcommand{\funsetone}{S}
\newcommand{\funsettwo}{T}
\newcommand{\tiersone}{\mathsf{I}}
\newcommand{\tierstwo}{\mathsf{J}}
\newcommand{\tiersthree}{\mathsf{K}}
\newcommand{\arityone}{\sigma}
\newcommand{\aritytwo}{\tau}
\newcommand{\algone}{\mathcal{A}}
\newcommand{\terms}[1]{\mathcal{T}_{#1}}
\newcommand{\conone}{\mathbf{c}}
\newcommand{\termone}{t}
\newcommand{\termtwo}{u}
\newcommand{\contone}{C}
\newcommand{\sigone}{\Sigma}
\newcommand{\sigtwo}{\Theta}
\newcommand{\sigthree}{\Xi}
\newcommand{\vsone}{V}
\newcommand{\vstwo}{W}
\newcommand{\vsthree}{X}
\newcommand{\vsfour}{Y}
\newcommand{\ordone}{\alpha}
\newcommand{\ordtwo}{\beta}
\newcommand{\ordthree}{\gamma}
\newcommand{\labelone}{\delta}
\newcommand{\labeltwo}{\epsilon}
\newcommand{\labelthree}{\zeta}
\newcommand{\subone}{\xi}
\newcommand{\subtwo}{\theta}
\newcommand{\rootone}{r}
\newcommand{\roottwo}{s}
\newcommand{\rootthree}{p}
\newcommand{\rootfour}{q}
\newcommand{\verone}{v}
\newcommand{\vertwo}{w}
\newcommand{\verthree}{x}
\newcommand{\verfour}{y}
\newcommand{\tgone}{G}
\newcommand{\tgtwo}{H}
\newcommand{\tgthree}{J}
\newcommand{\tgfour}{K}
\newcommand{\tgfive}{I}
\newcommand{\tgsix}{L}
\newcommand{\tgseven}{M}
\newcommand{\rrone}{\rho}
\newcommand{\rrtwo}{\sigma}
\newcommand{\homone}{\varphi}
\newcommand{\homtwo}{\psi}
\newcommand{\rewrite}[1]{\stackrel{#1}{\longrightarrow}}
\newcommand{\rewritecbv}[1]{\stackrel{#1}{\longrightarrow_\mathsf{i}}}
\newcommand{\rewritecbn}[1]{\stackrel{#1}{\longrightarrow_\mathsf{o}}}
\newcommand{\id}{\mathsf{id}}
\newcommand{\funid}{\mathbf{id}}
\newcommand{\proj}[2]{\mathsf{p}_{#1,#2}}
\newcommand{\tproj}[2]{\mathbf{p}_{#1,#2}}
\newcommand{\comp}[1]{\mathsf{comp}(#1)}
\newcommand{\rec}[1]{\mathsf{rec}(#1)}
\newcommand{\cond}[1]{\mathsf{cond}(#1)}
\newcommand{\subst}[3]{#1_{#2}(#3)}
\newcommand{\tiered}[1]{\mathbb{N}#1}
\newcommand{\ptiered}[2]{#2#1}
\newcommand{\constr}[1]{\mathsf{s}_{#1}}
\newcommand{\funconstr}[2]{\mathbf{s}_{#1}^{#2}}
\newcommand{\size}[1]{|#1|}
\newcommand{\psize}[2]{|#1|_{#2}}
\newcommand{\subsig}{\sqsubseteq}
\newcommand{\supsig}{\sqsupseteq}
\newcommand{\var}[1]{\mathit{Var}(#1)}
\newcommand{\qed}{\hfill$\Box$}
\newcommand{\GtoT}[1]{\langle #1\rangle}
\newcommand{\domain}[1]{\mathit{dom}(#1)}
\newcommand{\sgrone}{\mathcal{G}}
\newcommand{\sgrtwo}{\mathcal{H}}
\newcommand{\rewrgraph}{\rightarrow}
\newcommand{\rewrgraphcbv}{\rightarrow_\mathsf{i}}
\newcommand{\rewrgraphcbn}{\rightarrow_\mathsf{o}}
\newcommand{\subgr}[2]{#1\downarrow #2}
\newtheorem{lemma}{Lemma}
\newtheorem{proposition}{Proposition}
\newtheorem{theorem}{Theorem}
\newenvironment{proof}{\begin{trivlist}
       \item[\hskip \labelsep {\bfseries Proof.}]}{\hfill $\Box$ \end{trivlist}}
\newtheorem{definition}{Definition}
\newtheorem{example}{Example}
\newenvironment{varitemize}
{
\begin{list}{\labelitemi}
{\setlength{\itemsep}{0.0mm}
 \setlength{\topsep}{0.0mm}
 \setlength{\parindent}{0.0mm}
 \setlength{\parskip}{0.0mm}
 \setlength{\parsep}{0.0mm}
 \setlength{\partopsep}{0.0mm}
 \setlength{\leftmargin}{15pt}
 \setlength{\labelsep}{5pt}
 \setlength{\labelwidth}{10pt}}}
{
 \end{list} 
}}{
\newenvironment{varitemize}
{
\begin{list}{\labelitemi}
{\setlength{\itemsep}{0.0mm}
 \setlength{\topsep}{0.0mm}
 \setlength{\parindent}{0.0mm}
 \setlength{\parskip}{0.0mm}
 \setlength{\parsep}{0.0mm}
 \setlength{\partopsep}{0.0mm}
 \setlength{\leftmargin}{15pt}
 \setlength{\labelsep}{5pt}
 \setlength{\labelwidth}{10pt}}}
{
 \end{list} 
}}
\newcounter{number}
\newenvironment{varenumerate}
{\begin{list}{\arabic{number}.}
  {
   \usecounter{number}
   \setlength{\labelwidth}{4.0mm}
   \setlength{\labelsep}{2.0mm}
   \setlength{\itemindent}{0.0mm}
   \setlength{\itemsep}{0.0mm}
   \setlength{\topsep}{0.0mm}
   \setlength{\parskip}{0.0mm}
   \setlength{\parsep}{0.0mm}
   \setlength{\partopsep}{0.0mm}
  }
}
{\end{list}}
\title{General Ramified Recurrence\\ is Sound for Polynomial Time}
\author{{Ugo Dal Lago\footnote{
Dipartimento di Scienze dell'Informazione, Universit\`a di Bologna, 
Mura Anteo Zamboni 7, 40127 Bologna, Italy.
\texttt{dallago@cs.unibo.it}
}
}
\and 
{{Simone Martini}\footnote{
Dipartimento di Scienze dell'Informazione, Universit\`a di Bologna, 
Mura Anteo Zamboni 7, 40127 Bologna, Italy.
\texttt{martini@cs.unibo.it}
}}
\and
{{Margherita Zorzi}\footnote{
Dipartimento di Informatica, Universit\`a di Verona, 
Strada le Grazie 15, 37134 Verona, Italy.
\texttt{margherita.zorzi@univr.it}
}}}
\begin{document}
\maketitle
\begin{abstract}
\noindent
Leivant's ramified recurrence is one of the earliest examples of an implicit characterization of the polytime 
functions as a subalgebra of the primitive recursive functions. Leivant's result, however, is originally stated 
and proved only for word algebras, i.e.\ free algebras whose constructors take at most one argument. This paper 
presents an extension of these results to ramified functions on any free algebras, provided the underlying 
terms are represented as graphs rather than trees, so that sharing of identical subterms can be exploited.
\end{abstract}
%%%%%%%%%%%%%%%%%%%%%%
\section{Introduction}
%%%%%%%%%%%%%%%%%%%%%%
The characterization of complexity classes by language restrictions
(i.e., by \emph{implicit} means) instead of explicit resource bounds
is a major accomplishment of the area at the intersection of logic and
computer science. Bellantoni, Cook~\cite{BC92}, and
Leivant~\cite{Lei93}, building on Cobham pioneering
research~\cite{Cobham}, gave two (equivalent) restrictions on the
definition of the primitive recursive functions, obtaining in this way
exactly the functions computable in polynomial time.  We will focus in
this paper on Leivant's seminal work.

There are (at least) two main ingredients in these implicit characterizations of polytime. First, when data are represented by 
strings, as usual in complexity, each recursive call must consume at least
one symbol of the input. In this way the length of the recursive call sequence is linear in the size of the input. 
When \emph{numeric} functions are considered, and numbers are thus represented in basis $b\ge 2$, this amounts to 
recursion on notation~\cite{Cobham}, where each call divides the input by $b$.
The second main ingredient is a restriction on the recursion schema, in order to avoid nested recursions. This is the job of 
\emph{tiers}~\cite{Lei93,Simmons88} (in the Bellantoni-Cook's approach this would be achieved with a distinction between safe 
and normal arguments in a function). 
In Leivant's system variables and functions are equipped with a tier, and  composition must preserve tiers; crucially, in a 
legal recursion the tier of the recurrence parameter must be higher than the tiers of the recursive calls.
It is noteworthy that \emph{linearity} does \emph{not} play a major role --- a function can duplicate its inputs as many times 
as it likes\footnote{The naive restriction to primitive recursion on notation plus linearity (and no tiers) is too generous. 
Exponential functions would be easily definable. For example the function  $\tfunone$, defined by linear recursion on notation as 
$\tfunone(\mathbf{0})=\tfunone(\mathbf{1})=\mathbf{1}$ and 
$\tfunone(w\cdot \mathbf{0})=\tfunone(w\cdot \mathbf{1})=\tfuntwo(\tfunone(w))$ 
(where $\tfuntwo$ is any recursively defined function such that $|\tfuntwo(x)|\geq 2|x|$) has superpolynomial 
growth caused by the application of $\tfuntwo$ on the result of the recursive call.}.
In Leivant’s original paper~\cite{Lei93}, ramified recurrence over any free algebra is
claimed to be computable in polynomial time on the height of the input, hence on its
size. However, some proofs (in particular, the proof of Lemma 3.8) only go through
when the involved algebras have constructors of at most unary arities. 
Indeed, the extended and revised~\cite{Lei95} only refers to word-algebras. 
Marion~\cite{Marion03} extends the polynomiality result to constructors with signature
$s_1\times\cdots\times s_n\rightarrow s$ under the constraint that $s$ appears at most once among
the $s_i$, and it is held in the ICC community that the result holds also
for \emph{any} free
algebra (see for instance Marion's observation, reported as personal communication in Hofmann's~\cite{Hof00}, page 38). 
This gap in the literature gives rise to subtle misunderstandings (which
could amount to believing the contrary: in the unpublished~\cite{Caseiro96}
we read that Leivant  ``has given equational characterizations of complexity classes, but for constructors of arity greater 
than one, his classes exceed poly-time''). In this paper we thus fill the little gap, and prove anew that Leivant's 
characterization of polytime holds for \emph{general tiered recursion},
as part of a broader project aimed to give precise complexity content to rule based programming.

The point is that Leivant's proofs does not go through when 
moving from unary to \emph{arbitrary arity} constructors, since now the absence of linearity strikes back. Indeed, the 
following function on binary words (which is easily decorated with tiers, but which violates the
constraint in the already cited~\cite{Marion03})
$$
\tfunone(\mathbf{0})=\tfunone(\mathbf{1})=\mathbf{nil}
\qquad \tfunone(w\cdot\mathbf{0})=\tfunone(w\cdot\mathbf{1})=\mathbf{tree}(\tfunone(w),\tfunone(w))
$$
outputs the full binary tree, which has exponential size in the length of the input.

We believe that this is a \emph{representation problem}, and not an intrinsic limitation of tiering.
The apparent break of polytime appears because the explicit representation of data with strings
forces the explicit duplication of (part of) the input. But this duplication is inessential to the computation 
itself --- in fact, it could be avoided by just storing the intermediate result and re-using it when needed to produce the output. 
We thus prove that tiered recursion on \emph{any free algebra} may be computed in polytime, 
once  data is represented with directed acyclic graphs, and computation is performed via 
graph rewriting. In term graph rewriting  the sharing of common subterms is explicitly represented, 
and a compact representation of  data could be given. The result of a computation will be, in general, a DAG, 
where identical subterms that would be replicated several times in the string representation, are instead present only once.
The time needed to print the string representation of the output is not (and \emph{should not be}) counted in the computing time of the function. 

The choice of a suitable representation for data is of course a crucial ingredient of any complexity theory account. Explicit string 
representation of arbitrary arity terms is simply too generous,
akin to representation of numbers in base 1. Indeed, the discriminant for an acceptable encoding
of data (e.g.,~\cite{Papa}) is the fact that all acceptable encodings yield representations
which have polynomially related lengths. And this rules out the explicit string representation, in view of the example above. 
On the other side, we think that graph representation of elements of a free algebra makes a good choice. 

The present work solves an open question about a primal approach of ICC, joining the pure functional characterization of Leivant's approach with the powerful features of graph 
rewriting, thus making the subrecursive restriction sound  for general free algebras. 

The rest of this paper is structured as follow:
\begin{varitemize}
\item In Section~\ref{Sect:GenTieredRec} we define the class of the functions generated by general tiered recursion.
\item In Section~\ref{Sect:GraphRep} graph rewriting is introduced and explained. Basic definition and fundamental properties are 
  given.%, togheter with some confluence results. 
\item Section~\ref{Sec:TieredRecGraph} is devoted to the main technical results of the paper: tiered recursion is realized by term 
  graph rewriting and reduction can be performed in polynomial time.
\item In Section~\ref{Sec:Conclusion} we state some conclusions and  some final considerations about our work.
\end{varitemize}

%%%%%%%%%%%%%%%%%%%%%%%%%%%%%%%%%%%%%%%
\section{General Tiered Recursion}
\label{Sect:GenTieredRec}
%%%%%%%%%%%%%%%%%%%%%%%%%%%%%%%%%%%%%%%
A \emph{signature} $\sigone$ is a pair $(\funsetone,\arityone)$ where
$\funsetone$ is a set of symbols and $\arityone:\funsetone\rightarrow\N$
assigns to every symbol $\funone\in\funsetone$ an arity $\arityone(\funone)$.
Given two signatures $\sigone=(\funsetone,\arityone)$
and $\sigtwo=(\funsettwo,\aritytwo)$, we write
$\sigtwo\subsig\sigone$ iff $\funsettwo\subseteq\funsetone$ and $\arityone(\funone)=\aritytwo(\funone)$
for every $\funone\in\funsettwo$.
The set of \emph{terms} over a signature $\sigone$ can be easily defined
by induction as the smallest set $\terms{\sigone}$ such that:
\begin{varitemize}
\item
  If $\funone\in\funsetone$ and $\arityone(\funone)=0$, then $\funone$ itself,
  seen as an expression, is in $\terms{\sigone}$;
\item
  If $\funone\in\funsetone$, $\arityone(\funone)=n\geq 1$ and $\termone_1,\ldots,\termone_n\in\terms{\sigone}$,
  then the expression $\funone(\termone_1,\ldots,\termone_n)$ is in $\terms{\sigone}$.
\end{varitemize}
Given a signature $\sigone=(\{\funone_1,\ldots,\funone_t\},\arityone)$ and natural numbers $n,i_1,\ldots,i_n,i$, the classes of
functions defined by \emph{tiered recursion (on $\sigone$) with tiers $(i_1,\ldots,i_n)\rightarrow i$} are the smallest
collections of functions from $\terms{\sigone}^n$ to $\terms{\sigone}$ satisfying the following conditions:
\begin{varitemize}
\item
  For every $i\in\N$, the identity function $\id:\terms{\sigone}\rightarrow\terms{\sigone}$ is tiered recursive with tiers
  $i\rightarrow i$.
\item
  For every $i\in\N$ and for every $\funone_j$,
  the constructor function $\constr{\funone_j}:\terms{\sigone}^{\arityone(\funone_j)}\rightarrow\terms{\sigone}$
  is tiered recursive with tiers $(i,\ldots,i)\rightarrow i$.
\item
  For every $i,i_1,\ldots,i_n\in\N$ and for every $1\leq m\leq n$, the projection function
  $\proj{n}{m}:\terms{\sigone}^{n}\rightarrow\terms{\sigone}$
  is tiered recursive with tiers $(i_1,\ldots,i_n)\rightarrow i$ whenever $i_m=i$.
\item
  For every $i,i_1,\ldots,i_n,j_1,\ldots,j_m\in\N$, if 
  $\tfunone:\terms{\sigone}^n\rightarrow\terms{\sigone}$ is tiered recursive
  with tiers $(i_1,\ldots,i_n)\rightarrow i$ and for every $1\leq k\leq n$ the
  function $\tfuntwo_k:\terms{\sigone}^m\rightarrow\terms{\sigone}$
  is tiered recursive with tiers $(j_1,\ldots,j_m)\rightarrow i_k$, then
  the composition $\comp{\tfunone,\tfuntwo_1,\ldots,\tfuntwo_n}:\terms{\sigone}^m\rightarrow\terms{\sigone}$ 
  (defined in the obvious way) is tiered recursive with tiers $(j_1,\ldots,j_m)\rightarrow i$.
\item
  For every $i,j,i_1,\ldots,i_n\in\N$ with $i<j$, if for every $\funone_k$ there is
  a function $\tfuntwo_k:\terms{\sigone}^{2\arityone(\funone_i)+n}\rightarrow\terms{\sigone}$ which is tiered recursive
  with tiers 
  $$
  (\overbrace{j,\ldots,j}^{\mbox{$\arityone(\funone_k)$ times}},\overbrace{i,\ldots,i}^{\mbox{$\arityone(\funone_k)$ times}},i_1,\ldots,i_n)\rightarrow i
  $$ 
  then the function $\tfunone=\rec{\tfuntwo_1,\ldots,\tfuntwo_t}:\terms{\sigone}^{n+1}\rightarrow\terms{\sigone}$ 
  defined by primitive recursion as
  \begin{align*}
    \tfunone(&\funone_i(\termone_1,\ldots,\termone_{\arityone(\funone_i)}),\termtwo_1,\ldots,\termtwo_n)=\\
      &\tfuntwo_i(\termone_1,\ldots,\termone_{\arityone(\funone_i)},\tfunone(\termone_1,\termtwo_1,\ldots,\termtwo_n),\ldots,
      \tfunone(\termone_{\arityone(\funone_i)},\termtwo_1,\ldots,\termtwo_n),\termtwo_1,\ldots,\termtwo_n)
  \end{align*}
  is tiered recursive with tiers $(j,i_1,\ldots,i_n)\rightarrow i$
\item
  For every $i,j,i_1,\ldots,i_n\in\N$, if for every $\funone_i$ there is
  a function $\tfuntwo_i:\terms{\sigone}^{\arityone(\funone_i)+n}\rightarrow\terms{\sigone}$ which is tiered recursive
  with tiers 
  $$
  (\overbrace{j,\ldots,j}^{\mbox{$\arityone(\funone_i)$ times}},i_1,\ldots,i_n)\rightarrow i
  $$ 
  then the function $\tfunone=\cond{\tfuntwo_1,\ldots,\tfuntwo_t}:\terms{\sigone}^{n+1}\rightarrow\terms{\sigone}$ 
  defined as
  $$
    \tfunone(\funone_i(\termone_1,\ldots,\termone_{\arityone(\funone_i)}),\termtwo_1,\ldots,\termtwo_n)=
      \tfuntwo_i(\termone_1,\ldots,\termone_{\arityone(\funone_i)},\termtwo_1,\ldots,\termtwo_n)
  $$
  is tiered recursive with tiers $(j,i_1,\ldots,i_n)\rightarrow i$.
\end{varitemize}
In the following, metavariables like $\tiersone$ and $\tierstwo$ will be used for expressions
like $(i_1,\ldots,i_n)\rightarrow i$.

Roughly speaking, the r\^ole of tiers is to single out ``a copy'' of the signature by a level: this 
level permits to control the nesting of the recursion. Note that function composition preserves tiers, 
i.e. we can substitute terms only for variables of equal tier. Moreover, recursion is allowed 
only on a variable of tier higher than the tier of the function (in the definition, $i<j$ is required).
This construction comes from a \emph{predicative} notion of recurrence.
Examples of terms and functions follow. 

\begin{example}
%$\mathscr{T}=(\{c^{\mathscr{T}}_1, c^\mathscr{T}_2\}, \mathsf{R}_\mathscr{T}$),  with $\mathsf{R}_\mathscr{T}(c^{\mathscr{T}}_1)=2$ and $\mathsf{R}_\mathscr{T}(c^{\mathscr{T}}_2)=0$ is the signature of binary trees.
%\fbox{with label}
%Let be $\Upsilon$ the set of binary trees with labels
\begin{varitemize}
\item 
  Let $\sigone_1$ be the signature $(\{\funone_1, \funone_2\}, \arityone)$. If $\arityone(\funone_1)=0$ 
  and $\arityone(\funone_2)=1$, then $\terms{\sigone_1}$ is in bijective correspondence with $\N$.
\item 
  Let $\sigone_2$ be the signature $(\{\funone_1, \funone_2\}, \arityone)$. If we take 
  $\arityone(\funone_1)=0$ and 
  $\arityone(\funone_2)=2$ then $\terms{\sigone_2}$ can be thought as the set of (unlabeled) binary trees.
\item 
  Let $\sigone_3$ be the signature $(\{\funone_1,\funone_2,\funone_3,\funone_4\}, \arityone)$ where 
  $\arityone(\funone_1)=\arityone(\funone_3)=0$ and $\arityone(\funone_2)=\arityone(\funone_4)=2$.
  Then $\terms{\sigone_3}$ is the set of binary trees with binary labels.
\item 
  The function $\textsf{sum}:\terms{\sigone_1}^2\rightarrow\terms{\sigone_1}$
  computing addition on natural numbers can 
  be defined as 
  \begin{align*}
    \mathsf{sum}(\funone_1,x)&=x;\\
    \mathsf{sum}(\funone_2(y),x)&=\funone_2(\mathsf{sum}(y,x)).
  \end{align*}
  It can be easily proved to be tiered recursive with tiers $(j,i)\rightarrow i$ 
  whenever $j>i$.
\item 
  We can define the function $\mathsf{mirror}$ on $\terms{\sigone_2}$ 
  which mirrors a tree (i.e. each left subtree becomes right subtree and viceversa):  
  \begin{align*}
    \mathsf{mirror}(\funone_1)&=\funone_1;\\
    \mathsf{mirror}(\funone_2(x,y))&=\funone_2(\mathsf{mirror}(y),\mathsf{mirror}(x));
  \end{align*}
  $\mathsf{mirror}$ can be proved to be tiered recursive with tiers $(j)\rightarrow i$
  whenever $j>i$.
\end{varitemize}
\end{example}

%%%%%%%%%%%%%%%%%%%%%%%%%%%%%%%%%%%%%
\section{Graph Rewriting}
\label{Sect:GraphRep}
%%%%%%%%%%%%%%%%%%%%%%%%%%%%%%%%%%%%%
In this section, we introduce term graph rewriting, following~\cite{TGRbarendregt} but
adapting the framework to our specific needs.
\begin{definition}[Labelled Graph]
Given a signature $\sigone=(\funsetone,\arityone)$, a 
\emph{labelled graph over $\sigone$} consists of a directed
acyclic graph together with an ordering on the outgoing edges of each vertex and a (partial)
labelling of vertices with symbols from $\sigone$ such that the out-degree of each node
matches the arity of the corresponding symbols (and is $0$ if the labelling is undefined).
Formally, a labelled graph is a triple $\tgone=(\vsone,\ordone,\labelone)$
where: 
\begin{varitemize}
\item
  $\vsone$ is a set of \emph{vertices}.
\item
  $\ordone:\vsone\rightarrow\vsone^*$ is a (total) \emph{ordering function}.
\item
  $\labelone:\vsone\rightharpoonup\funsetone$ is a (partial) \emph{labelling function} such
  that the length of $\ordone(\verone)$ is the arity of $\labelone(\verone)$ if
  $\labelone(\verone)$ is defined and is $0$ otherwise.
\end{varitemize}
A labelled graph $(\vsone,\ordone,\labelone)$ is \emph{closed} iff $\labelone$ is a 
total function. 
\end{definition}
Consider the signature $\Sigma=(\{\funone,\funtwo,\funthree,\funfour\},\arityone)$, where
arities assigned to $\funone,\funtwo,\funthree,\funfour$ by $\arityone$ are $2$, $1$, $0$, $2$. Examples of labelled graphs over 
the signature $\Sigma$ are the following ones:
\begin{displaymath}
\xymatrix@R=11pt@C=8pt{
     & \funone \ar@/_/[d]\ar@/^/[d] &   \\
     & \funtwo \ar[d] &   \\
     & \funfour \ar[dl]\ar[dr] &   \\
\funtwo \ar[d] &          & \funthree \\
\bot &          &   \\
}
\hspace{50pt}
\xymatrix@R=11pt@C=8pt{
\funone\ar@/^/[dr]\ar@/_/[dr] & & \funtwo \ar[dl]\\
          & \bot &                         \\
}
\hspace{50pt}
\xymatrix@R=11pt@C=8pt{
     & \funone\ar@/_1pc/[dd]\ar[d] &   \\
     & \funtwo \ar[d] &   \\
     & \funone \ar[dl]\ar[dr] &   \\
\bot &          & \funtwo\ar[d] \\
     &          & \bot  \\
}
\end{displaymath}
The symbol $\bot$ denotes vertices where the underlying labelling function
is undefined (and, as a consequence, no edge departs from such vertices).
Their r\^ole is similar to the one of variables in terms.

If one of the vertices of a labelled graph is selected as the \emph{root}, we obtain
a term graph:
\begin{definition}[Term Graphs]
A \emph{term graph}, is a quadruple 
$\tgone=(\vsone,\ordone,\labelone,\rootone)$, where 
$(\vsone,\ordone,\labelone)$ is a labelled graph and
$\rootone\in\vsone$ is the \emph{root} of the term graph.
\end{definition}
The following are graphic representations of some term graphs. 
The root is the only vertex drawn inside a circle.
\begin{displaymath}
\xymatrix@R=11pt@C=8pt{
     &  *+[o][F]{\funone} \ar@/_/[d]\ar@/^/[d] &   \\
     & \funtwo \ar[d] &   \\
     & \funone \ar[dl]\ar[dr] &   \\
\funtwo \ar[d] &          & \funthree \\
\bot &          &   \\
}
\hspace{50pt}
\xymatrix@R=11pt@C=8pt{
 \funone\ar@/^/[dr]\ar@/_/[dr] & &  *+[o][F]{\funtwo} \ar[dl]\\
          & \bot &                         \\
}
\hspace{50pt}
\xymatrix@R=11pt@C=8pt{
     & *+[o][F]{\funone}\ar@/_1pc/[dd]\ar[d] &   \\
     & \funtwo \ar[d] &   \\
     & \funone \ar[dl]\ar[dr] &   \\
\bot &          & \funtwo\ar[d] \\
     &          & \bot  \\
}
\end{displaymath}
Given a (closed) term graph $\tgone$ on $\sigone$, $\GtoT{\tgone}$ is simply
the term in $\terms{\sigone}$ obtained by unfolding $\tgone$ starting
from its root.

The notion of an homomorphism between labelled graphs is not only
interesting mathematically, but will be crucial in defining rewriting:
\begin{definition}[Homomorphisms]
\begin{sloppypar}
An \emph{injective homomorphism} between two labelled graphs
$\tgone=(\vsone_\tgone,\ordone_\tgone,\labelone_\tgone)$ and 
$\tgtwo=(\vsone_\tgtwo,\ordone_\tgtwo,\labelone_\tgtwo)$ over the same 
signature $\sigone$ is a function $\homone$ from $\vsone_\tgone$ to $\vsone_\tgtwo$ 
preserving the labelled graph structure. In particular
\end{sloppypar}
\begin{align*}
  \labelone_\tgtwo(\homone(\verone))&=&\labelone_\tgone(\verone);\\
  \ordone_\tgtwo(\homone(\verone))&=&\homone^*(\ordone_\tgone(\verone));
\end{align*}
for any $\verone\in\domain{\labelone_\tgone}$, where $\homone^*$ is the obvious
generalization of $\homone$ to sequences of vertices.
Moreover, $\homone$ is injective on $\domain{\labelone_\tgone}$, i.e.,
$\homone(\verone)=\homone(\vertwo)$ implies $\verone=\vertwo$ for every
$\verone,\vertwo\in\domain{\labelone_\tgone}$.
An \emph{injective homomorphism} between two term graphs 
$\tgone=(\vsone_\tgone,\ordone_\tgone,\labelone_\tgone,\rootone_\tgone)$ and 
$\tgtwo=(\vsone_\tgtwo,\ordone_\tgtwo,\labelone_\tgtwo,\rootone_\tgtwo)$ is
an injective homomorphism between $(\vsone_\tgone,\ordone_\tgone,\labelone_\tgone)$ and 
$(\vsone_\tgtwo,\ordone_\tgtwo,\labelone_\tgtwo)$ such that
$\homone(\rootone_\tgone)=\rootone_\tgtwo$. Two labelled graphs $\tgone$ and $\tgtwo$ are 
isomorphic iff there is a bijective homomorphism from
$\tgone$ to $\tgtwo$; in this case, we write $\tgone\cong\tgtwo$. Similarly
for term graphs.
\end{definition}

In the sequel, we will always use \emph{homomorphism}, to mean \emph{injective homomorphism}.
Injectivity of $\homone$ on labelled vertices is not part of the
usual definition of an homomorphism between labelled graphs (see~\cite{TGRbarendregt}). We insist on injectivity because we want a rewriting rule  ``to match without sharing'' (see the notion of redex, Definition~\ref{Def:Redex}), which will be crucial for the r\^ole of unfolding graph rewriting rules (Section~\ref{Sect:Unfolding}) in the implementation of tiered recursion. Injectivity makes our notion of redex less general than in the usual setting
of graph rewriting. This, however, suffices for our purposes.
In the following, we will consider term graphs modulo isomorphism, i.e., $\tgone=\tgtwo$
iff $\tgone\cong\tgtwo$. Observe that two isomorphic term graphs have the same graphical
representation.
\begin{definition}[Graph rewriting Rules]
A \emph{graph rewriting rule} over a signature $\sigone$ 
is a triple $\rrone=(\tgone,\rootone,\roottwo)$ such that:
\begin{varitemize}
\item
  $\tgone$ is a labelled graph;
\item
  $\rootone,\roottwo$ are vertices of $\tgone$, called 
  the \emph{left root} and the \emph{right root} of $\rrone$,
  respectively.
%\item
%  Any path starting in $\rootone$ is a
%  left path.
\end{varitemize}
\end{definition}
The following are three examples of graph rewriting rules, assuming $\funone,\funtwo,\funthree,\funfour$ to be function
symbols in the underlying signature $\sigone$:
\begin{displaymath}
\xymatrix@R=11pt@C=8pt{
     &  *+[o][F]{\funone} \ar@/_/[d]\ar@/^/[d] &   \\
     & \funtwo \ar[d] &   \\
     & \funfour \ar[dl]\ar[dr] &   \\
\funtwo \ar[d] &          & \funthree \\
*+[F]{\bot} &          &   \\
}
\hspace{40pt}
\xymatrix@R=11pt@C=8pt{
 *+[o][F]{\funone}\ar@/^/[dr]\ar@/_/[dr] & &  *+[F]{\funtwo} \ar[dl]\\
          & \bot &                         \\
}
\hspace{40pt}
\xymatrix@R=11pt@C=8pt{
        & *+[o][F]{\funone}\ar[dl]\ar[dr] &  &  *+[F]{\funthree}\\
\funtwo\ar[d] &                  & \funtwo\ar[d] &   \\
\bot    &                  & \bot    &   \\
}
\end{displaymath}
Graphically, the left root is the (unique) node inside a circle, while the right root is
the (unique) node inside a square.
\begin{definition}[Subgraphs]
Given a labelled graph $\tgone=(\vsone_\tgone,\ordone_\tgone,\labelone_\tgone)$ 
and any vertex $\verone\in\vsone_\tgone$, the \emph{subgraph of $\tgone$ rooted
at $\verone$}, denoted $\subgr{\tgone}{\verone}$, is the
term graph $(\vsone_{\subgr{\tgone}{\verone}},\ordone_{\subgr{\tgone}{\verone}},\labelone_{\subgr{\tgone}{\verone}},
\rootone_{\subgr{\tgone}{\verone}})$ where
\begin{varitemize}
\item
  $\vsone_{\subgr{\tgone}{\verone}}$ is the subset of $\vsone_\tgone$ whose elements
  are vertices which are reachable from $\verone$ in $\tgone$.
\item
  $\ordone_{\subgr{\tgone}{\verone}}$ and $\labelone_{\subgr{\tgone}{\verone}}$ are
  the appropriate restrictions of $\ordone_\tgone$ and $\labelone_\tgone$
  to $\vsone_{\subgr{\tgone}{\verone}}$.
\item
  $\rootone_{\subgr{\tgone}{\verone}}$ is $\verone$.
\end{varitemize}
\end{definition}
A term graph $\tgone=(\vsone,\ordone,\labelone,\rootone)$ is said to be a 
\emph{proper term graph} if $\subgr{(\vsone,\ordone,\labelone)}{\rootone}=\tgone$.

We are finally able to give the notion of a redex, that represents
the occurrence of the lhs of a rewriting rule  in a graph:
\begin{definition}[Redexes]\label{Def:Redex}
Given a labelled graph $\tgone$, a \emph{redex} for $\tgone$ is
a pair $(\rrone,\homone)$, where $\rrone$ is a rewriting rule  
$(\tgtwo,\rootone,\roottwo)$ and $\homone$ is an homomorphism
between $\subgr{\tgtwo}{\rootone}$ and $\tgone$.
\end{definition}
If $((\tgtwo,\rootone,\roottwo),\homone)$ is a redex in
$\tgone$, we say, with a slight abuse of notation, that $\homone(\rootone)$ \emph{is} 
itself a redex. In most cases, this does not introduce any ambiguity.
Given a term graph $\tgone$ and a redex $((\tgtwo,\rootone,\roottwo),\homone)$,
the result of firing the redex is another term graph obtained by
successively applying the following three steps to $\tgone$:
\begin{varenumerate}
\item
  The \emph{build phase}: create an isomorphic copy of the portion of
  $\subgr{\tgtwo}{\roottwo}$ not contained in
  $\subgr{\tgtwo}{\rootone}$ (which may contain arcs originating in $\subgr{\tgtwo}{\roottwo}$ 
  and entering $\subgr{\tgtwo}{\rootone}$), and add it to $\tgone$, obtaining $\tgthree$.
  The underlying ordering and labelling functions are defined in the natural
  way.
\item
  The \emph{redirection phase}: all edges in $\tgthree$ pointing to $\homone(\rootone)$
  are replaced by edges pointing to the copy of $\roottwo$. If $\homone(\rootone)$
  is the root of $\tgone$, then the root of the newly created graph will be
  the newly created copy of $\roottwo$. The graph $\tgfour$ is obtained.
\item
  The \emph{garbage collection phase}: all vertices which are not accessible
  from the root of $\tgfour$ are removed. The graph $\tgfive$ is obtained.
\end{varenumerate}
We will write $\tgone\rewrite{(\tgtwo,\rootone,\roottwo)}\tgfive$ (or simply
$\tgone\rewrgraph\tgfive$, if this does not cause ambiguity) in this case.

\begin{example}
As an example, assuming again $\funone,\funtwo,\funthree$ to be
function symbols, consider the term graph $\tgone$ and the rewriting rule 
$\rrone=(\tgtwo,\rootone,\roottwo)$:
\begin{displaymath}
\xymatrix@R=11pt@C=8pt{
  &  *+[o][F]{\funone}\ar[dl]\ar[dr] &   \\
  \funtwo \ar[d] & & \funone \ar[ll]\ar[dll] \\
  \funthree & &  \\
  & \tgone & \\
}
\hspace{40pt}
\xymatrix@R=11pt@C=8pt{
 *+[o][F]{\funone} \ar[d]\ar[ddrr] & & *+[F]{\funtwo}\ar[d] \\
 \funtwo \ar[d] & & \funone\ar[ll]\ar[d] \\
\bot & & \funthree \\
 & \rrone & \\
}
\end{displaymath}
There is an homomorphism $\homone$ from
$\subgr{\tgtwo}{\rootone}$ to
$\tgone$. In particular, $\homone$ maps $\rootone$ to
the rightmost vertex in $\tgone$.
Applying the build phase and the redirection phase we get $\tgthree$
and $\tgfour$ as follows:
\begin{displaymath}
\xymatrix@R=11pt@C=8pt{
  &  *+[o][F]{\funone}\ar[dl]\ar[dr] & & \funtwo\ar[d]  \\
  \funtwo \ar[d] & & \funone \ar[ll]\ar[dll] & \funone\ar@/^/[dlll]\ar@/_1pc/[lll] \\
  \funthree & & & \\
  & \tgthree & & \\
}
\hspace{40pt}
\xymatrix@R=11pt@C=8pt{
  &  *+[o][F]{\funone}\ar[dl]\ar[rr] & & \funtwo\ar[d]  \\
  \funtwo \ar[d] & & \funone \ar[ll]\ar[dll] & \funone\ar@/^/[dlll]\ar@/_1pc/[lll] \\
  \funthree & & & \\
  & \tgfour & & \\
}
\end{displaymath}
Finally, applying the garbage collection phase, we get the
result of firing the redex $(\rrone,\homone)$:
\begin{displaymath}
\xymatrix@R=11pt@C=8pt{
 & *+[o][F]{\funone}\ar@/_/[dddl]\ar[d] & \\
 & \funtwo\ar[d] & \\
 & \funone\ar[dl]\ar[dr] & \\
 \funtwo\ar[rr] & & \funthree\\
 & I & \\
}
\end{displaymath}
\end{example}
Given a proper graph $\tgone$ and a redex $(\rho, \phi)$ for $\tgone$, it 
is easy to prove that the result of firing the redex $(\rho, \phi)$ 
is a proper term graph: this is an immediate consequence of how the garbage
collection phase is defined.

\condinc{
We can define two restrictions on
$\rewrgraph$ as follows. Let $((\tgtwo,\rootone,\roottwo),\homone)$ be a redex
in $\tgone$. Then it is said to be
\begin{varitemize}
\item
  An \emph{innermost} redex iff for every redex 
  $((\tgthree,\rootthree,\rootfour),\homtwo)$ in $\tgone$,
%  the only path from $\homone(\rootone)$ to $\homtwo(\rootthree)$ is
%  the trivial path. -sm
  there is no proper path from $\homone(\rootone)$ to $\homtwo(\rootthree)$.
\item
  An \emph{outermost} redex iff for every redex 
  $((\tgthree,\rootthree,\rootfour),\homtwo)$ in $\tgone$,
%  the only path from $\homtwo(\rootthree)$ to $\homone(\rootone)$ is
%  the trivial path. -sm
  there is no proper path from $\homtwo(\rootthree)$ to $\homone(\rootone)$.
\end{varitemize}
If the redex $((\tgtwo,\rootone,\roottwo),\homone)$ is innermost
we also write $\tgone\rewritecbv{(\tgtwo,\rootone,\roottwo)}\tgfive$ or
$\tgone\rewrgraphcbv\tgfive$. Similarly, for an outermost redex
$((\tgtwo,\rootone,\roottwo),\homone)$ we write
$\tgone\rewritecbn{(\tgtwo,\rootone,\roottwo)}\tgfive$ or
$\tgone\rewrgraphcbn\tgfive$.
}
{
The notion of innermost and outermost graph rewriting can be defined in a natural way.
If $\tgone\rewrite{(\tgtwo,\rootone,\roottwo)}\tgfive$ by way of innermost graph
rewriting, we'll write $\tgone\rewritecbv{(\tgtwo,\rootone,\roottwo)}\tgfive$ (or
simply $\tgone\rewrgraphcbv\tgfive$). Similarly, for outermost reduction:
$\tgone\rewritecbn{(\tgtwo,\rootone,\roottwo)}\tgfive$ or
$\tgone\rewrgraphcbn\tgfive$.
}

Given two graph rewriting rules $\rrone=(\tgtwo,\rootone,\roottwo)$ and $\rrtwo=(\tgthree,\rootthree,\rootfour)$, 
$\rrone$ and $\rrtwo$ are said to be \emph{overlapping} iff there is a term graph $\tgone$ and two homomorphisms
$\homone$ and $\homtwo$ such that $(\rrone,\homone)$ and $(\rrtwo,\homtwo)$ are both redexes in $\tgone$
with $\homone(\rootone)=\homtwo(\rootthree)$. 
\begin{definition}
A \emph{graph rewriting system} (GRS) over a signature $\sigone$ is 
a set $\sgrone$ of non-overlapping graph rewriting rules on $\sigone$.
\end{definition}
If $\tgone\rewritecbn{\rrone}\tgtwo$ and $\rrone\in\sgrone$, we
write $\tgone\rewritecbn{\sgrone}\tgtwo$ or $\tgone\rewrgraphcbn\tgtwo$, if 
this doesn't cause any ambiguity. Similarly when $\tgone\rewritecbv{\rrone}\tgtwo$.

The notion of a term graph can be generalized into the notion of a \emph{multi-rooted}
term graph, i.e., a graph with $n\geq 1$ (not necessarily distinct) roots. Formally,
it is a tuple $\tgone=(\vsone,\ordone,\labelone,\rootone_1,\ldots,\rootone_n)$, where 
$(\vsone,\ordone,\labelone)$ is a labelled graph and
$\rootone_1,\ldots,\rootone_n\in\vsone$. Likewise, we can easily define the
subgraph of $\tgone$ \emph{rooted
at $\verone_1,\ldots,\verone_n$}, denoted $\subgr{\tgone}{\verone_1,\ldots,\verone_n}$,
as a multi-rooted term graph. Similarly for homomorphisms.
\condinc{
%%%%%%%%%%%%%%%%%%%%%%%%%%%%%%%%%%%%%%%%
\subsection{Some Confluence Results}
%%%%%%%%%%%%%%%%%%%%%%%%%%%%%%%%%%%%%%%%
We now want to give some confluence results for GRSs. Let us first focus
on outermost rewriting. Intuitively, outermost rewriting is the most efficient way of performing
reduction in presence of sharing, since computation is performed only if its result does
not risk to be erased. First, we need the following auxiliary lemma:
\begin{lemma}
Suppose $\tgone\rewrgraphcbn\tgtwo$ and $\tgone\rewrgraph\tgthree$, where $\tgtwo\neq\tgthree$.
Then either $\tgthree\rewrgraphcbn\tgtwo$
or there is $\tgfour$ such that $\tgtwo\rewrgraph\tgfour$ and $\tgthree\rewrgraphcbn\tgfour$.
\end{lemma}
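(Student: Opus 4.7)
The plan is a case analysis on the relative position in $\tgone$ of the two redex roots. Write the outermost step $\tgone \rewrgraphcbn \tgtwo$ as application of a redex $(\rrone_1, \homone_1)$ with root $u = \homone_1(\rootone_1)$, and $\tgone \rewrgraph \tgthree$ as application of a redex $(\rrone_2, \homone_2)$ with root $v = \homone_2(\rootone_2)$. Since $\tgtwo \neq \tgthree$ the two redexes are distinct, and outermostness of the first forbids any proper path from $v$ to $u$ in $\tgone$. Three configurations therefore remain to consider: $u = v$; $u$ and $v$ incomparable; or $v$ strictly below $u$.

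I would dispatch $u = v$ first, by showing that it cannot arise. If $\rrone_1 = \rrone_2$, the injectivity of homomorphisms together with the rigidity imposed by preservation of labels and of the ordering function $\ordone$ determines the homomorphism uniquely from its value at the root, forcing $\homone_1 = \homone_2$ and hence $\tgtwo = \tgthree$; if instead $\rrone_1 \neq \rrone_2$, the non-overlapping hypothesis on the GRS is directly violated. Either alternative contradicts the assumptions.

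The incomparable case I would handle by a standard parallel-moves argument: the two redexes act on disjoint labelled portions of $\tgone$ (no labelled vertex is shared, by injectivity and non-overlap), and neither firing introduces a vertex above either root. Hence $(\rrone_1, \homone_1)$ persists as an outermost redex in $\tgthree$ and $(\rrone_2, \homone_2)$ persists as a redex in $\tgtwo$, and a routine check that the build, redirection and garbage-collection phases commute on disjoint patterns yields a common reduct $\tgfour$ with $\tgtwo \rewrgraph \tgfour$ and $\tgthree \rewrgraphcbn \tgfour$. The remaining case, $v$ strictly below $u$, I would split further according to the vertex $w$ of $\rrone_1$'s LHS through which $v$ is reached. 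When $w$ is a $\bot$-vertex of the LHS, the vertex $v$ lies within the preserved ``argument'' region of the redex and the previous argument goes through with minor adjustments, closing the diamond as before. When $w$ is a labelled vertex, firing $\rrone_1$ redirects the edges entering $u$ and garbage-collects the entire region containing $v$; all modifications produced by $\rrone_2$ in that region are wiped out, so applying $\rrone_1$ outermost in $\tgthree$ yields precisely $\tgtwo$, giving the first disjunct $\tgthree \rewrgraphcbn \tgtwo$.

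The main obstacle I expect is this last subcase. Two things need to be checked carefully: first, that $(\rrone_1, \homone_1)$ is still applicable and outermost in $\tgthree$, which requires verifying that the action of $\rrone_2$, although carried out inside $\rrone_1$'s labelled pattern, neither damages the pattern above $v$ nor produces a new redex above $u$ (a consequence of the locality of the redirection phase and of the fact that no path escapes upward out of $\rrone_1$'s pattern); and second, that the result of firing $\rrone_1$ in $\tgthree$ is isomorphic to $\tgtwo$, which reduces to the observation that garbage collection is blind to whatever structure $\rrone_2$ had built in the discarded region.
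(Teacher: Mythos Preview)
Your positional case split is genuinely different from the paper's argument, which is much shorter: after noting that outermostness forbids any proper path from $v$ to $u$ and that the two steps are ``independent'' by the non-overlapping hypothesis, the paper distinguishes only whether $v$ is erased by the outermost step $\tgone\rewrgraphcbn\tgtwo$. If erased, then $v$ must lie below $u$, and applying the outermost rule to $\tgthree$ recovers $\tgtwo$; if not, both residual redexes survive and the diamond closes. So the paper's two cases correspond exactly to the two disjuncts of the conclusion.

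Your finer decomposition does not have this property, and that creates gaps. In the ``$\bot$-vertex'' subcase you assert the diamond always closes, but a rule may drop an argument (a $\bot$-vertex of the LHS absent from the RHS); if moreover $v$ has no other access from the root, then $v$ is garbage-collected from $\tgtwo$ even though it sits in the argument region, and the correct conclusion is the first disjunct $\tgthree\rewrgraphcbn\tgtwo$, not the diamond. Dually, in the ``labelled vertex'' subcase you rely on $v$'s region being wiped out; but in a DAG ``$v$ strictly below $u$'' does not exclude a separate path from the root to $v$ avoiding $u$, so $v$ may persist in $\tgtwo$ and the diamond, not the first disjunct, is what one obtains. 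The discriminator that actually works is survival of $v$ in $\tgtwo$, exactly as in the paper.

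The obstacle you flag at the end is real, but your proposed resolution is not sound: firing $\rrone_2$ redirects \emph{every} edge entering $v$, including the edge from $v$'s labelled parent inside $\rrone_1$'s pattern, so the pattern \emph{is} damaged at that position, and there is no reason the new target should carry the label and children required at $w$. The paper does not argue this point in detail either; it simply invokes independence of the two steps as a consequence of non-overlap. To make your route work you would need the same appeal, or else a direct argument that under the non-overlapping hypothesis the labelled-vertex subcase cannot arise.
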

\begin{proof}
Let $\verone$ and $\vertwo$ be the two redexes in $\tgone$ giving rise
to $\tgone\rewrgraphcbn\tgtwo$ and $\tgone\rewrgraph\tgthree$, respectively.
Similarly, let $\rrone$ and $\rrtwo$ be the two involved rewriting rules.
Clearly, there cannot be any (non-trivial) path from $\vertwo$ to $\verone$, by
definition of outermost rewriting. Now, the two rewriting steps are independent
from each other (because of the non-overlapping condition). 
There are now two cases. Either $w$ is erased when performing $\tgone\rewrgraphcbn\tgtwo$,
or it is not erased. In the first case, $w$ must be ``contained'' in $v$, and therefore,
we may apply $\rrone$ to $\tgthree$, obtaining $\tgtwo$. If $\vertwo$ has not been erased, one can clearly apply $\rrone$ to $\tgthree$ and
$\rrtwo$ to $\tgtwo$, obtaining a fourth graph $\tgfour$.\qed\end{proof}

The observation we just made can be easily turned into a more general result on reduction sequences of arbitrary length:
\begin{proposition}\label{prop:confcbn}
Suppose that $\tgone\rewrgraphcbn^n\tgtwo$ and $\tgone\rewrgraph^m\tgthree$. Then there are
$\tgfour$ and $k,l\in\N$ such that $\tgtwo\rewrgraph^k\tgthree$,
$\tgthree\rewrgraphcbn^l\tgfour$ and $n+k\leq m+l$. 
\end{proposition}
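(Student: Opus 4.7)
The plan is to iterate the preceding lemma, which establishes a strong commutation between one outermost step and one arbitrary step. I would lift this in two stages: first to a one-to-many version, then to the full many-to-many statement by a second induction. Throughout, I read the conclusion $\tgtwo\rewrgraph^k\tgthree$ as $\tgtwo\rewrgraph^k\tgfour$, the common-reduct diamond that is clearly intended.

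As an intermediate step I would prove the following one-to-many lemma: if $\tgone\rewrgraphcbn\tgtwo$ and $\tgone\rewrgraph^m\tgthree$ then there exist $\tgfour$, $k$, $l$ with $\tgtwo\rewrgraph^k\tgfour$, $\tgthree\rewrgraphcbn^l\tgfour$ and $1+k\leq m+l$. The proof is by induction on $m$. The base case $m=0$ is immediate: take $\tgfour=\tgtwo$, $k=0$, $l=1$. For $m\geq 1$, decompose the arbitrary reduction as $\tgone\rewrgraph\tgone'\rewrgraph^{m-1}\tgthree$ and apply the preceding lemma to $\tgone\rewrgraphcbn\tgtwo$ and $\tgone\rewrgraph\tgone'$. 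Each of the three possible conclusions (either $\tgtwo=\tgone'$; or $\tgone'\rewrgraphcbn\tgtwo$; or there is $\tgfive$ with $\tgtwo\rewrgraph\tgfive$ and $\tgone'\rewrgraphcbn\tgfive$) reduces the problem to the inductive hypothesis applied to a strictly shorter arbitrary reduction, starting from $\tgone'$ or from $\tgfive$. The bound $1+k\leq m+l$ then drops out of routine arithmetic on the step counts returned by the inductive hypothesis in each subcase.

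With this intermediate lemma in hand, I would then prove the proposition by induction on $n$. The base case $n=0$ is trivial: take $\tgfour=\tgthree$, $k=m$, $l=0$. For $n\geq 1$, split $\tgone\rewrgraphcbn\tgone'\rewrgraphcbn^{n-1}\tgtwo$, apply the intermediate lemma to $\tgone\rewrgraphcbn\tgone'$ and $\tgone\rewrgraph^m\tgthree$ to obtain $\tgfour'$, $k'$, $l'$ with $\tgone'\rewrgraph^{k'}\tgfour'$, $\tgthree\rewrgraphcbn^{l'}\tgfour'$ and $1+k'\leq m+l'$, then apply the inductive hypothesis to $\tgone'\rewrgraphcbn^{n-1}\tgtwo$ and $\tgone'\rewrgraph^{k'}\tgfour'$ to produce the final $\tgfour$ together with $k$ and $l''$ satisfying $(n-1)+k\leq k'+l''$. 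Setting $l:=l'+l''$, the telescoping estimate $n+k = 1+(n-1)+k \leq 1+k'+l'' \leq m+l'+l'' = m+l$ yields the desired inequality.

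The main obstacle is not the existence of the common reduct but the bookkeeping of the inequality $n+k\leq m+l$ through both nested inductions. What makes the proof go through is the sharpness of the preceding lemma: a single outermost step is always absorbed by \emph{at most one} outermost step on the other side, and in the favourable subcases by none at all, so the outermost budget $l$ grows by at most one per step of the arbitrary reduction on the other side. This is exactly the slack the inequality allows, and once one has noticed this, the proof is a mechanical tiling argument with careful attention to the index arithmetic in each of the three subcases delivered by the lemma.
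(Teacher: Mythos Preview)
Your proof is correct, and your reading of the conclusion as $\tgtwo\rewrgraph^k\tgfour$ is the intended one. The paper's own proof is a one-line ``easy induction on $n+m$'', which carries out the same tiling of the one-step lemma; your two-stage organization (one-to-many by induction on $m$, then many-to-many by induction on $n$) is just a more explicit packaging of that same induction.
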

\begin{proof}
An easy induction on $n+m$.\qed
\end{proof}
Proposition~\ref{prop:confcbn} tells us that if we perform $n$ outermost steps and $m$ 
generic steps from $\tgone$, we can close the diagram in such a way that the number of
steps in the first branch is smaller or equal to the number of steps in the second branch.

With innermost reduction, the situation is exactly dual: 
\begin{lemma}
Suppose $\tgone\rewrgraph\tgtwo$ and $\tgone\rewrgraphcbv\tgthree$, where $\tgtwo\neq\tgthree$.
Then either $\tgthree\rewrgraph\tgtwo$
or there is $\tgfour$ such that $\tgtwo\rewrgraphcbv\tgfour$ and $\tgthree\rewrgraph\tgfour$.
\end{lemma}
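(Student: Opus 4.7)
The plan is to mirror step by step the argument already given for the outermost case, with the r\^oles of the two redexes swapped. Let $\verone$ be the redex fired in the generic step $\tgone\rewrgraph\tgtwo$ (using rule $\rrone$) and let $\vertwo$ be the innermost redex fired in $\tgone\rewrgraphcbv\tgthree$ (using rule $\rrtwo$). From the innermost character of $\vertwo$ we obtain, by definition, that there is no proper path from $\vertwo$ to any other redex of $\tgone$, and in particular none from $\vertwo$ to $\verone$. Consequently $\vertwo$ either lies strictly below $\verone$, under some $\bot$-leaf of $\rrone$'s left-hand side, or lies in a region of $\tgone$ independent of $\verone$. Non-overlapping of $\rrone$ and $\rrtwo$ then makes the two rewriting steps independent of each other.

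I would then case-split on whether $\vertwo$ is erased while performing $\tgone\rewrgraph\tgtwo$. If it is, then $\vertwo$ must hang off a $\bot$-leaf of $\rrone$ whose corresponding vertex is unreachable from the right root of $\rrone$ after redirection and garbage collection, so the whole subgraph containing $\vertwo$ is discarded regardless of whether or not $\vertwo$ was first fired. Applying $\rrone$ to $\verone$ in $\tgthree$ therefore still produces $\tgtwo$, giving the first disjunct $\tgthree\rewrgraph\tgtwo$.

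In the opposite case, firing $\verone$ introduces material only above the matched pattern and never below any of its $\bot$-leaves, so the subgraph rooted at $\vertwo$ is carried across unchanged (possibly duplicated through sharing). Hence $\vertwo$ is still a redex of $\tgtwo$, and since nothing reachable from $\vertwo$ has changed, no new redex can appear below it, so $\vertwo$ is in fact still innermost in $\tgtwo$. Symmetrically, firing $\vertwo$ in $\tgone$ does not disturb the pattern of $\rrone$ at $\verone$, because $\verone$ lies above or in parallel to $\vertwo$; thus $\verone$ remains a redex of $\tgthree$. Firing $\rrtwo$ at $\vertwo$ in $\tgtwo$ and $\rrone$ at $\verone$ in $\tgthree$ then produce isomorphic graphs; denoting the common result by $\tgfour$, we obtain $\tgtwo\rewrgraphcbv\tgfour$ and $\tgthree\rewrgraph\tgfour$.

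The main technical obstacle is the innermost-preservation claim in the second case: one needs to check that the build phase of firing $\verone$ does not create any new edge leaving a vertex located below $\vertwo$ in $\tgtwo$, so that no fresh redex is introduced inside $\vertwo$'s subgraph and $\vertwo$ remains the bottom-most redex along all the paths reaching it. This becomes slightly delicate when $\vertwo$ sits inside a $\bot$-matched subgraph that $\rrone$'s right-hand side references more than once, since that subgraph is then shared through several copies; but in each such copy the subtree rooted at (the unique) $\vertwo$ is identical to the one in $\tgone$, so the argument still goes through. The erased case, by contrast, reduces to a routine inspection of what the garbage collection phase removes.
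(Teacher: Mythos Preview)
Your proposal is correct and follows exactly the approach the paper intends: the paper does not write out a proof for this lemma at all, merely remarking that ``the situation is exactly dual'' to the outermost case, and your argument is precisely that dualization of the proof of the preceding lemma. In fact you supply more detail than the paper does, notably in checking that $\vertwo$ remains \emph{innermost} in $\tgtwo$ after firing $\verone$---a point the paper's outermost proof glosses over with ``one can clearly apply''.
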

\begin{proposition}\label{prop:confcbv}
Suppose that $\tgone\rewrgraph^n\tgtwo$ and $\tgone\rewrgraphcbv^m\tgthree$. Then there are
$\tgfour$ and $k,l\in\N$ such that $\tgtwo\rewrgraphcbv^k\tgthree$,
$\tgthree\rewrgraph^l\tgfour$ and $n+k\leq m+l$. 
\end{proposition}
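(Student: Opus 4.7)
The plan is to mimic the proof of Proposition~\ref{prop:confcbn}, using the single-step innermost lemma stated immediately above as the inductive tool. I would proceed by induction on $n+m$, producing the required $\tgfour$, $k$, and $l$ at each step.

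The base case $n+m=0$ forces $\tgtwo=\tgthree=\tgone$, so taking $\tgfour=\tgone$ with $k=l=0$ satisfies all three conditions. If exactly one of $n,m$ vanishes, the witness is immediate from the non-trivial reduction together with reflexivity on the other side (the only arithmetic content is $0+m\leq m+0$, respectively $n+0\leq 0+n$).

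For the inductive step with $n,m\geq 1$, peel off the first step of each reduction: $\tgone\rewrgraph \tgone_1\rewrgraph^{n-1}\tgtwo$ and $\tgone\rewrgraphcbv \tgone_2\rewrgraphcbv^{m-1}\tgthree$. Apply the preceding single-step lemma to the two leading steps $\tgone\rewrgraph \tgone_1$ and $\tgone\rewrgraphcbv \tgone_2$. This produces one of three situations: either $\tgone_1=\tgone_2$, in which case the induction hypothesis applies directly to the two tails; or $\tgone_2\rewrgraph \tgone_1$, so the general step is absorbed into the innermost branch and the count on the innermost side strictly decreases; or there is a common successor $\tgone_3$ with $\tgone_1\rewrgraphcbv \tgone_3$ and $\tgone_2\rewrgraph \tgone_3$, in which case one invokes the induction hypothesis on sub-diagrams of strictly smaller measure and then stitches the results together with the single-step transitions produced by the lemma.

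The main obstacle is the arithmetic bookkeeping in the last sub-case: having combined pieces coming from the lemma and from the induction hypothesis, one must verify that the resulting $k$ and $l$ still satisfy $n+k\leq m+l$. The crucial observation is that the \emph{absorption} branch of the single-step lemma contributes strict slack in favour of the innermost side, and this slack is exactly what compensates for the extra general step added whenever the common-successor branch fires. Once this accounting is checked, the induction is essentially mechanical, as the authors indicate for the outermost case with the brief remark ``an easy induction on $n+m$''.
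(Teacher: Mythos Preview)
Your proposal is correct and matches the paper's approach. The paper in fact gives no proof at all for Proposition~\ref{prop:confcbv}: it simply states the lemma--proposition pair as the exact dual of the outermost case, whose proof is the one-line ``an easy induction on $n+m$''. Your plan---induct on $n+m$, peel off a leading step on each side, invoke the preceding single-step innermost lemma, and do the case split into equality/absorption/common-successor---is precisely that dual argument spelled out, and your remark about the absorption branch providing the unit of slack needed for the inequality $n+k\leq m+l$ is the right bookkeeping observation.
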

%Summing up, the theory of optimality turns out to be very simple in the realm of (orthogonal,
%constructor) term graph rewriting: -sm Commented out, as per referee
In presence of sharing, therefore, outermost reduction is the best one can do, while
innermost reduction is the worst strategy, since we may reduce redexes in subgraphs 
that will be later discarded. As a by-product, we get confluence:
\begin{theorem}
Suppose that $\tgone\rewrgraphcbn^n\tgtwo$, $\tgone\rewrgraph^m\tgthree$ and
$\tgone\rewrgraphcbv^k\tgfour$, where $\tgtwo$, $\tgthree$ and $\tgfour$
are normal forms. Then $\tgtwo=\tgthree=\tgfour$ and  $n\leq m\leq k$.
\end{theorem}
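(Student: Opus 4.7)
The plan is to invoke the two confluence-with-length Propositions~\ref{prop:confcbn} and~\ref{prop:confcbv} in sequence, and to exploit the assumption that $\tgtwo$, $\tgthree$, $\tgfour$ are normal forms in order to force all the reductions produced by these Propositions to be empty. Since the two Propositions already relate the \emph{lengths} of reductions (not only their existence), their combination will deliver both the equality of the three normal forms and the chain of inequalities $n\leq m\leq k$ in one go.

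First, I would apply Proposition~\ref{prop:confcbn} to the two reductions $\tgone\rewrgraphcbn^n\tgtwo$ and $\tgone\rewrgraph^m\tgthree$. The Proposition furnishes a term graph $\tgfive$ together with naturals $k_1,l_1\in\N$ such that $\tgtwo\rewrgraph^{k_1}\tgfive$, $\tgthree\rewrgraphcbn^{l_1}\tgfive$, and $n+k_1\leq m+l_1$. Since $\tgtwo$ is a normal form, no redex can be fired from it, so $k_1=0$ and $\tgfive=\tgtwo$. Analogously, normality of $\tgthree$ gives $l_1=0$ and $\tgfive=\tgthree$. Hence $\tgtwo=\tgthree$, and the inequality collapses to $n\leq m$.

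Next, I would apply Proposition~\ref{prop:confcbv} to the reductions $\tgone\rewrgraph^m\tgthree$ and $\tgone\rewrgraphcbv^k\tgfour$. This yields a term graph $\tgsix$ and naturals $k_2,l_2\in\N$ with $\tgthree\rewrgraphcbv^{k_2}\tgsix$, $\tgfour\rewrgraph^{l_2}\tgsix$, and $m+k_2\leq k+l_2$. Normality of $\tgthree$ forces $k_2=0$ and $\tgsix=\tgthree$, while normality of $\tgfour$ forces $l_2=0$ and $\tgsix=\tgfour$. Therefore $\tgthree=\tgfour$ and $m\leq k$. Combining with the first step gives $\tgtwo=\tgthree=\tgfour$ and $n\leq m\leq k$.

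No substantive obstacle arises: the two preceding Propositions already encapsulate the entire content of the theorem. The only care needed is notational—one must instantiate each Proposition with the reduction strategies placed in the correct slots, so that the normal-form hypotheses actually annihilate the existentially quantified reduction lengths, and so that the resulting inequalities emerge with the right orientation (an \emph{upper} bound $n\leq m$ coming from the outermost/generic comparison, and an \emph{upper} bound $m\leq k$ coming from the generic/innermost comparison).
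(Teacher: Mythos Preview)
Your proof is correct and follows essentially the same route as the paper: apply Proposition~\ref{prop:confcbn} to the pair $(\rewrgraphcbn^n,\rewrgraph^m)$ and Proposition~\ref{prop:confcbv} to the pair $(\rewrgraph^m,\rewrgraphcbv^k)$, then use normality of $\tgtwo,\tgthree,\tgfour$ to collapse the residual reductions to length~$0$. If anything, your write-up is more explicit than the paper's own two-line argument, which omits the step $\tgthree=\tgfour$ and contains a couple of evident typos in the reduction subscripts.
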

\begin{proof}
From $\tgone\rewrgraphcbn^n\tgtwo$, $\tgone\rewrgraph^m\tgthree$ and
Proposition~\ref{prop:confcbn}, it follows that $n\leq m$ and that
$\tgtwo=\tgthree$. From $\tgone\rewrgraphcbn^m\tgthree$, $\tgone\rewrgraph^k\tgfour$ and
Proposition~\ref{prop:confcbv}, it follows that $m\leq k$.\qed
\end{proof}}{}
%%%%%%%%%%%%%%%%%%%%%%%%%%%%%%%%%%%%%
\subsection{Unfolding Graph Rewriting Rules}\label{Sect:Unfolding}
%%%%%%%%%%%%%%%%%%%%%%%%%%%%%%%%%%%%%
When computing a recursively defined function $\tfunone$ by graph rewriting,
we need to take advantage of sharing. In particular, if the recurrence
argument is a graph $\tgone$, the number of recursive calls generated
by calling $\tfunone$ on the graph $\tgone$ should be equal to the number of
vertices of $\tgone$, which can in turn be exponentially smaller than the size
of $\GtoT{\tgone}$. Unfortunately, this cannot be achieved by a finite set of graph rewriting
rules: distinct rules, called \emph{unfolding graph rewriting rules} are needed
for each possible argument to $\tfunone$. 

Let $\sigone=(\funsetone,\arityone)$ and
$\sigtwo=(\funsettwo,\aritytwo)$ be signatures such that
$S$ and $T$ are disjoint but in bijective correspondence. Let
$\varphi:\funsetone\rightarrow\funsettwo$ be a bijection and suppose
$\aritytwo(\varphi(\verone))=2\arityone(\verone)+n$ whenever $\verone$
is in $\funsetone$, for a fixed $n\in\N$. Finally, let $\funone$
be a symbol not in $\funsetone\cup\funsettwo$.

Under these hypothesis, an \emph{unfolding graph rewriting rule
for $\sigone$, $\sigtwo$ and $\funone$} is a graph rewriting
rule $\rrone=(\tgone,\rootone,\roottwo)$
where $\tgone=(\vsone,\ordone,\labelone)$ is a labelled graph
on a signature $\sigthree\supsig\sigone+\sigtwo$ assigning
arity $n+1$ to $\funone$, and satisfying the following constraints:
\begin{varitemize}
\item
  The elements of $\vsone$ are the (pairwise distinct) vertices
  $$
  \verone_1,\ldots,\verone_m,\vertwo_1,\ldots,\vertwo_m,\verthree_1,\ldots,\verthree_n,\verfour
  $$
  Let $\chi$ be a function mapping any $\verone_i$ to $\vertwo_i$.
\item
  For every $1\leq i\leq m$, $\ordone(\verone_i)$ is a sequence
  of vertices from $\{\verone_1,\ldots,\verone_m\}$. Moreover,
  the set of vertices of $\subgr{\tgone}{\verone_1}$ coincides
  with $\{\verone_1,\ldots,\verone_m\}$.
\item
  For every $1\leq i\leq m$, $\labelone(\verone_i)\in\funsetone$ and $\labelone(\vertwo_i)=\varphi(\labelone(\verone_i))$;
  moreover
  $$
  \ordone(\vertwo_i)=\ordone(\verone_i)\chi^*(\ordone(\verone_i))\verthree_1\ldots\verthree_n
  $$
\item
  $\labelone(\verfour)=\funone$ and
  $\ordone(\verfour)=\verone_1\verthree_1\ldots\verthree_n$.
\item
  $\labelone(\verthree_i)$ is undefined for every $i$.
\item
  $\rootone=\verfour$ and $\roottwo=\vertwo_1$.
\end{varitemize}

\begin{example}\label{ex:unfolding}
Let $\sigone=(\{\funtwo_1,\funtwo_2,\funtwo_3\},\arityone)$, where
arities assigned to $\funtwo_1,\funtwo_2,\funtwo_3$ are $2$, $1$, $0$,
respectively. Let $\sigtwo=(\{\funthree_1,\funthree_2,\funthree_3\},\arityone)$, where
arities assigned to $\funthree_1,\funthree_2,\funthree_3$ are $6$, $4$, $2$,
respectively. If $\sigthree\supsig\sigone+\sigtwo$ is a signature attributing
arity $3$ to $\funone$, we are in a position to give examples of unfolding
graph rewriting rules for $\sigone$, $\sigtwo$ and $\funone$. Here is one:
  \begin{displaymath}
    \xymatrix@R=11pt@C=12pt{
      & *+[o][F]{\funone} \ar[ld]\ar@/^2pc/[rrddd]\ar@/^2.3pc/[rrrddd] & & & \\
      \funtwo_1\ar@/_/[d]\ar@/^/[d] & & *+[F]{\funthree_1}\ar@/^/[lld] \ar@/_/[lld] \ar@/_/[d]\ar@/^/[d]\ar@/^/[rdd]\ar@/^1pc/[rrdd] & & \\
      \funtwo_2\ar[d] & & \funthree_2\ar[lld]\ar[d]\ar[rd]\ar@/^1pc/[rrd] & & \\
      \funtwo_3 & & \funthree_3\ar[r]\ar@/^1.3pc/[rr] & \bot & \bot \\
    }    
  \end{displaymath}
\end{example}
{%\blue 
Informally, thus, in an unfolding graph rewriting rule  for 
$\sigone$, $\sigtwo$ and $\funone$, we may single out four parts.
First, the root, labelled with $\funone$, which is also the left root of the rule;
second, a
subgraph labelled only on $\sigone$ (and this is $\subgr{\tgone}{\verone_1}$), which represents the recurrence
argument to $\funone$; third, a subgraph labelled only on $\sigtwo$ (it is $\subgr{\tgone}{\vertwo_1}$) which is isomorphic to the second
part, but for the addition of certain outgoing edges (the root of this part is the right root
of the rule); a last part consisting of $n$ unlabelled vertices, which have only incoming edges, coming from 
the root and $\subgr{\tgone}{\vertwo_1}$.
}

%Informally, thus, in an unfolding graph rewriting rule  for 
%$\sigone$, $\sigtwo$ and $\funone$, we may single out four parts.
%First, the root, labelled with $\funone$, which is also the left root of the rule;
%second, a
%subgraph labelled only on $\sigone$ (and this is $\subgr{\tgone}{\verone_1}$), which together with the root
%provides the matching information for the firing of the rule; 
%third, a subgraph labelled only on $\sigtwo$ (it is $\subgr{\tgone}{\vertwo_1}$) which is isomorphic to the second
%part, but for the addition of certain outgoing edges (the root of this part is the right root
%of the rule); a last part consisting of $n$ unlabelled nodes, which have only incoming edges, coming from 
%the root and $\subgr{\tgone}{\vertwo_1}$.
%%%%%%%%%%%%%%%%%%%%%%%%%%%%%%%%%%%%%%%%%%%%%%%%%%%%
\subsection{Infinite Graph Rewriting Systems}
%%%%%%%%%%%%%%%%%%%%%%%%%%%%%%%%%%%%%%%%%%%%%%%%%%%%
In~\cite{UgoSimoFopara}, two of the authors proved that \emph{finite} graph rewriting systems
are polynomially invariant when seen as a computational model. In other words, Turing
machines and finite GRSs can simulate each other with a polynomial overhead
in both directions, where both computational models are taken with their
natural cost models.

In next section, however, we will use \emph{infinite} GRSs (that is, GRSs with an infinite set of rules) to 
implement recurrence on arbitrary free algebras. We thus need some suitable
notion of computability on such infinite systems (which could be even uncomputable).
We  say that a specific graph rewriting system $\sgrone$ on the signature
$\sigone$ is \emph{polytime presentable} 
if there is a deterministic polytime algorithm $\algone_\sgrone$ which, given a term graph $\tgone$ on $\sigone$ returns:

\begin{varitemize}
\item
  A term graph $\tgtwo$ such that $\tgone\rewritecbv{\sgrone}\tgtwo$;
\item
  The value $\bot$ if such a graph $\tgtwo$ does not exist.
\end{varitemize}
In other words, a (infinite) GRS is polytime presentable iff there is a polytime algorithm 
which is able to compute any reduct of
any given graph, returning an error value if such a reduct does not exist.

%%%%%%%%%%%%%%%%%%%%%%%%%%%%%%%%%%%%%%%%
\subsection{Graph Rewriting in Context}
%%%%%%%%%%%%%%%%%%%%%%%%%%%%%%%%%%%%%%%%
A \emph{context} $\contone$ is simply a term graph $\tgone$. Given a context $\contone$,
let us denote with $\var{\contone}$ the set of those vertices of $\tgone$ which are not
labelled. 

Given a context $\contone$, another term graph $\tgone$ and a function $\subone$ mapping
every element of $\vsone\subseteq\var{\contone}$ into a vertex of $\tgone$, the
term graph $\subst{\contone}{\subone}{\tgone}$ is the one obtained from $\contone$
and $\tgone$ by removing all the vertices in $\vsone$ and
by redirecting to $\subone(\verone)$ every edge pointing to $\verone\in\vsone$.
More formally, given context $\contone=(\vsone,\ordone,\labelone,\rootone)$,
term graph $\tgone=(\vstwo,\ordtwo,\labeltwo,\roottwo)$ and function $\subone:\vsthree\rightarrow\vstwo$ 
such that $\vsthree\subseteq\var{\contone}$ and $\roottwo$ is in the range of 
$\subone$, $\subst{\contone}{\subone}{\tgone}$
is the term graph $(\vsfour,\ordthree,\labelthree,\rootthree)$ such that:
\begin{varenumerate}
\item
  The set of vertices $\vsfour$ is the disjoint union of $(\vsone-\vsthree)$ and $\vstwo$;
\item
  If $\verone\in\vsone-\vsthree$, then $\ordthree(\verone)$ is $\ordone(\verone)$
  where every occurrence of any $\vertwo\in\vsthree$ is replaced by $\subone(\vertwo)$;
\item
  If $\verone\in\vstwo$, then $\ordthree(\verone)=\ordtwo(\verone)$;
\item
  For every $\verone\in\vsone-\vsthree$, it holds that 
  $\labelthree(\verone)=\labelone(\verone)$, while for every $\verone\in\vstwo$,
  it holds that $\labelthree(\verone)=\labeltwo(\verone)$;
\item
  If $\rootone\in\vsthree$, then $\rootthree=\roottwo$, otherwise $\rootthree=\rootone$.
\end{varenumerate}

\begin{example}
Let $\contone$ and $\tgone$ be the following graphs
\begin{displaymath}
    \xymatrix@R=11pt@C=12pt{
  & & *+[o][F]{\funthree_1}\ar@/_/[ld]\ar[d]\ar@/^/[rd]& & \\
      & \funthree_2 \ar[d] & \funthree_3\ar@/_/[d]\ar@/^/[d] & \bot &  \\
      & \bot &  \funthree_4   & & \\
      &  & \contone & & \\
    }  
\hspace{40pt}
   \xymatrix@R=11pt@C=12pt{
        *+[o][F]{\funtwo_1}\ar@/_/[d]\ar@/^/[d] & &  & & \\
      \funtwo_2\ar[d] & & & & \\
      \funtwo_3 & & &  &  \\
      \tgone &  &  & & \\
    }  
\end{displaymath}
If we take the function $\xi$  such that $\xi$ maps the unlabelled node pointed by $\funthree_2$ to 
$\funtwo_1$ and the unlabelled node pointed by $\funthree_1$ to $\funtwo_3$, 
then the graph $\subst{\contone}{\subone}{\tgone}$ is 
  \begin{displaymath}
      \xymatrix@R=11pt@C=12pt{
  & & *+[o][F]{\funthree_1}\ar@/_/[ld]\ar[d]\ar@/^/[rddd]& & \\
      & \funthree_2 \ar[dd] & \funthree_3\ar@/_/[d]\ar@/^/[d] &  &  \\
  &  &  \funthree_4  & & \\
      &    \funtwo_1\ar@/^/[r]\ar@/_/[r]& \funtwo_2\ar[r] & \funtwo_3&&\\
      &  &\subst{\contone}{\subone}{\tgone}& & \\
    }  
  \end{displaymath}
\end{example}
When we write $\subst{\contone}{\subone}{\tgone}\rewrgraphcbv\subst{\contone}{\subtwo}{\tgtwo}$,
we are tacitly assuming that rewriting have taken place inside $\tgone$.
Notice that $\tgone\rewrgraphcbv\tgtwo$ does not imply
that $\subst{\contone}{\subone}{\tgone}\rewrgraphcbv\subst{\contone}{\subtwo}{\tgtwo}$ for some
$\subtwo$. Moreover, by the very definition of graph rewriting:
\begin{lemma}\label{lemma:iso}
If $\contone$ and $\tgone$ are proper and
$\subst{\contone}{\subone}{\tgone}\rewrgraphcbv\subst{\contone}{\subtwo}{\tgtwo}$,
then for every $\verone_1,\ldots,\verone_s$ such that 
$\subgr{\tgone}{\subone(\verone_1),\ldots,\subone(\verone_s)}$ does not contain
any redex, it holds that
$\subgr{\tgone}{\subone(\verone_1),\ldots,\subone(\verone_s)}
\cong\subgr{\tgtwo}{\subtwo(\verone_1),\ldots,\subtwo(\verone_s)}$.
\end{lemma}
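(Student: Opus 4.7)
The plan is to analyze the three phases (build, redirection, garbage collection) that define the reduction step and show that, under the stated hypothesis, they leave the subgraph in question untouched. Fix the redex witnessing $\subst{\contone}{\subone}{\tgone}\rewrgraphcbv\subst{\contone}{\subtwo}{\tgtwo}$: this consists of a rule $\rrone=(\tgthree,\rootthree,\rootfour)$ together with a homomorphism $\homone$ whose image lies inside $\tgone$, and let $\vertwo=\homone(\rootthree)$ be the corresponding redex vertex of $\tgone$.

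The first key step is to argue that $\vertwo$ does not belong to $\subgr{\tgone}{\subone(\verone_1),\ldots,\subone(\verone_s)}$. Suppose otherwise; then, by the homomorphism conditions together with the fact that the vertex set of any subgraph is closed under the reachability relation of $\tgone$, the entire image of $\subgr{\tgthree}{\rootthree}$ under $\homone$ must sit inside that subgraph, so $(\rrone,\homone)$ is itself a redex of $\subgr{\tgone}{\subone(\verone_1),\ldots,\subone(\verone_s)}$, contradicting the hypothesis. In particular every $\subone(\verone_i)$ differs from $\vertwo$. Iterating the same reachability observation, no edge originating inside $\subgr{\tgone}{\subone(\verone_1),\ldots,\subone(\verone_s)}$ can target $\vertwo$ either, for otherwise $\vertwo$ would itself be reachable from the $\subone(\verone_i)$ and hence belong to the subgraph.

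With these two facts in hand I would inspect each phase in turn. The build phase only creates fresh copies of vertices of $\subgr{\tgthree}{\rootfour}\setminus\subgr{\tgthree}{\rootthree}$, none of which belong to the existing subgraph. The redirection phase only rewires edges whose target is $\vertwo$, and we have just seen that no such edge lies inside $\subgr{\tgone}{\subone(\verone_1),\ldots,\subone(\verone_s)}$; hence every edge of that subgraph survives unchanged, and since $\subone(\verone_i)\neq\vertwo$ for each $i$, the identification $\subtwo(\verone_i)=\subone(\verone_i)$ also follows. The garbage-collection phase can only discard vertices that are no longer reachable from the root of the ambient graph, but each $\subone(\verone_i)$ is still reached through the corresponding $\verone_i$ in the context $\contone$ (which is proper), so every vertex of the subgraph survives together with its label and outgoing edges. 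The identity map on this vertex set then provides the required isomorphism onto $\subgr{\tgtwo}{\subtwo(\verone_1),\ldots,\subtwo(\verone_s)}$.

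The only real obstacle is bookkeeping: I must carefully distinguish edges that originate inside the subgraph (those never change) from edges that enter it from outside (those may indeed be redirected, but this does not affect the subgraph itself). Once this distinction is formalised, the argument is a straightforward phase-by-phase inspection; note in passing that the innermost hypothesis plays no essential role here beyond guaranteeing the existence of the reduction step itself.
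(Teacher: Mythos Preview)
The paper does not actually supply a proof of this lemma: it is introduced with the phrase ``by the very definition of graph rewriting'' and then simply stated. Your phase-by-phase analysis of build, redirection, and garbage collection is precisely the argument one has to carry out in order to justify that sentence, and it is correct; in effect you have written out what the authors left implicit. One small wording issue: when you say ``$\subone(\verone_i)$ is still reached through the corresponding $\verone_i$ in the context $\contone$'', bear in mind that $\verone_i$ itself has been removed in $\subst{\contone}{\subone}{\tgone}$; what you mean (and what makes the argument work) is that the path in $\contone$ from its root to $\verone_i$ lies entirely in $\contone\setminus\domain{\subone}$ except for its endpoint, and hence survives the rewriting step untouched with its last edge now landing on $\subone(\verone_i)$.
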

In other words, those portions of $\tgone$ which do not contain any redex are preserved while
performing reduction in $\tgone$.

Contexts will be useful when proving that certain GRSs correctly computes tiered recursive
functions. In particular, they will allow us to prove those statements by induction on the
proof that the functions under consideration are tiered recursive.

%%%%%%%%%%%%%%%%%%%%%%%%%%%%%%%%%%%%%%%%%%%%%%%%%%%%
\section{Implementing Tiered Recursion by Term Graph Rewriting}\label{Sec:TieredRecGraph}
%%%%%%%%%%%%%%%%%%%%%%%%%%%%%%%%%%%%%%%%%%%%%%%%%%%%
Given a signature $\sigone=(\funsetone,\arityone)$, $\tiered{\sigone}$ stands
for the (infinite) signature
$$
(\{\funone^i\mid\funone\in\funsetone\mbox{ and }i\in\N\},\aritytwo)
$$
where $\aritytwo(\funone^i)=\arityone(\funone)$ for every $i\in\N$.
Given a term $\termone$ in $\terms{\sigone}$ and $i\in\N$, $\termone^i$ denotes
the term in $\terms{\tiered{\sigone}}$ obtained by labelling any
function symbol in $\termone$ with the specific natural number $i$; $\ptiered{\sigone}{i}$ is
the subsignature of $\tiered{\sigone}$ of those function symbols labelled with the
particular natural number $i$. With $\psize{\tgone}{i}$ we denote the number
of vertices of $\tgone$ labelled with functions in $\ptiered{\sigone}{i}$, whenever
$\tgone$ is a term graph on $\sigtwo\supsig\tiered{\sigone}$.

Suppose $\tfunone:\terms{\sigone}^n\rightarrow\terms{\sigone}$ and
suppose the term graph rewriting system $\sgrone$ over a signature
$\sigtwo\supsig\tiered{\sigone}$ (including a symbol $\tfunone$ of arity $n$)
is such that whenever $\tfunone(\termone_1,\ldots,\termone_n)=\termtwo$,
and $\GtoT{\tgone}=\funone(\termone_1^{i_1},\ldots,\termone_n^{i_n})$,
it holds that $\tgone\rewrgraphcbv^*\tgtwo$ where 
$\GtoT{\tgtwo}=\termtwo^i$. Then we say \emph{$\sgrone$ represents
$\tfunone$ with respect to $(i_1,\ldots i_n)\rightarrow i$}.

Now, let $\sgrone$ be a term graph rewriting system representing
$\tfunone$  with respect to $(i_1,\ldots i_n)\rightarrow i$ 
and let $p:\N\rightarrow\N$ be a polynomial. We say
that \emph{$\sgrone$ is bounded by $p$} iff whenever 
$\GtoT{\tgone}=\funone(\termone_1^{i_1},\ldots,\termone_n^{i_n})$
and $\tgone\rewrgraphcbv^m\tgtwo$, it holds that
$m,\size{\tgtwo}\leq p(\size{\tgone})$.

The main result of this paper is the following:
\begin{theorem}\label{theo:mainresult}
For every signature $\sigone$ and for every tiered recursive function 
$\tfunone:\terms{\sigone}^n\rightarrow\terms{\sigone}$
with tiers $\tiersone=(i_1,\ldots i_n)\rightarrow i$ there are a 
term graph rewriting system $\sgrone_\tfunone^\tiersone$ on 
$\sigtwo\supsig\tiered{\sigone}$ and a polynomial
$p:\N^n\rightarrow\N$ such that $\sgrone_\tfunone^\tiersone$ represents $\tfunone$ with respect to $\tiersone$,
being bounded by $p$. Moreover, $\sgrone_\tfunone^\tiersone$ is polytime presentable.
\end{theorem}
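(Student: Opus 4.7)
The plan is to proceed by structural induction on the derivation that $\tfunone$ is tiered recursive with tiers $\tiersone=(i_1,\ldots,i_n)\to i$. For each derivation rule I would supply a GRS $\sgrone_\tfunone^\tiersone$ on a signature extending $\tiered{\sigone}$ (and adding a fresh symbol $\tfunone$ of arity $n$), together with a polynomial bound and a polytime decision procedure for innermost single-step reduction, invoking the inductive hypothesis on the immediate subderivations.

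The base cases (identity, constructors, projections) are each handled by one finite rule, with trivial polynomial bound and polytime presentability. For a composition $\comp{\tfunone,\tfuntwo_1,\ldots,\tfuntwo_n}$, I would take the disjoint union of the GRSs associated inductively with $\tfunone$ and each $\tfuntwo_k$, together with one finite rule unfolding the composite symbol into a graph whose $\tfuntwo_k$-nodes share the input arguments and whose outputs feed a $\tfunone$-node; polynomial bounds compose, and Lemma~\ref{lemma:iso} shows that sub-GRS reductions can be performed in their local contexts without disturbing one another. The conditional case is similar, using the head constructor of the principal argument to select among the $\tfuntwo_k$ rules and their sub-GRSs.

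The main obstacle is the primitive recursion case: $\tfunone=\rec{\tfuntwo_1,\ldots,\tfuntwo_t}$ with tiers $(j,i_1,\ldots,i_n)\to i$ and $j>i$. Here $\sgrone_\tfunone^\tiersone$ is the union of the GRSs inductively associated with each $\tfuntwo_k$ together with an \emph{infinite} family of unfolding rules in the sense of Section~\ref{Sect:Unfolding}, one for each closed labelled graph $\tgone$ over $\ptiered{\sigone}{j}$: the rule matches $\tfunone$ applied to $\tgone$ plus the $n$ tail parameters and produces a parallel copy of $\tgone$ in which each vertex $\verone_k$ labelled $\funone_k^j$ has been paired with a fresh $\tfuntwo_k$-vertex $\vertwo_k$ whose $2\arityone(\funone_k)+n$ arguments point to the children of $\verone_k$, to the corresponding $\vertwo$-copies, and to the tail parameters, with right root $\vertwo_1$ matching the root of $\tgone$.

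A single firing of this rule produces exactly $\psize{\tgone}{j}\le\size{\tgone}$ new $\tfuntwo_k$-redexes, one per vertex of $\tgone$, rather than one per vertex of the unfolded term $\GtoT{\tgone}$ (which may be exponentially larger): this is the whole point of working with sharing. The tier constraint $j>i$ prevents any sub-GRS for a $\tfuntwo_k$ from reintroducing a $\tfunone$-application on a tier-$j$ recurrence argument, so no further unfolding rules fire inside the local reductions, and the polynomial bounds of the $\tfuntwo_k$'s compose with the linear blow-up of the unfolding into a single polynomial $p$ dominating both step count and intermediate graph size at every stage (again using Lemma~\ref{lemma:iso} to control the preservation of sharing across innermost rewrites). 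Finally, polytime presentability reduces to checking whether the input is a $\tfunone$-application on a closed labelled graph over $\ptiered{\sigone}{j}$ and, if so, synthesizing on the fly the corresponding unfolding rule, whose right-hand side is entirely determined by the shape of the recurrence argument and the fixed schema; both the lookup and the rewrite can be carried out in time polynomial in the input size.
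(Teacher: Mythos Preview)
Your construction of $\sgrone_\tfunone^\tiersone$ matches the paper's exactly, including the use of the infinite family of unfolding rules for the recursion case and the observation that a single unfolding creates only $\psize{\tgone}{j}$ fresh $\tfuntwo_k$-redexes. The treatment of the base cases, composition, conditional, and polytime presentability is also essentially what the paper does.

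There is, however, a genuine gap in the recursion case: your assertion that ``the polynomial bounds of the $\tfuntwo_k$'s compose with the linear blow-up of the unfolding into a single polynomial $p$'' is not justified by the inductive hypothesis as you have stated it. After the unfolding you have $x\le\size{\tgone}$ chained $\tfuntwo_{s_a}$-redexes, processed innermost first; the output $\tgtwo_x$ of the last one becomes part of the \emph{input} to the next, and so on. If all you know inductively is that each $\tfuntwo_k$ satisfies a bound $p_k(\size{\mbox{input}})$ on time and output size, then the input to $\tfuntwo_{s_{x-1}}$ can have size up to $\size{\tgone}+p_{s_x}(\size{\tgone})$, the input to $\tfuntwo_{s_{x-2}}$ up to $p_{s_{x-1}}$ of that, and after $x$ iterations you face a tower of polynomials of height $x\le\size{\tgone}$, which is not polynomial. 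The tier constraint $j>i$ does prevent nested unfoldings, as you note, but by itself it says nothing about the growth of the tier-$i$ data being threaded through the chain.

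The paper resolves this by proving a strictly stronger statement (Proposition~\ref{prop:inductionload}) that carries, in addition to the global size and step bounds, a \emph{tier-sensitive} size invariant:
\[
\psize{\tgtwo_m}{i}\;\le\;\psize{\tgone}{i}\;+\;p\Bigl(\sum_{k>i}\psize{\tgone}{k}\Bigr).
\]
This says the tier-$i$ part of the output exceeds the tier-$i$ part of the input by an amount that depends only on the sizes at \emph{strictly higher} tiers. In the recursion case those higher-tier parts live entirely inside the (fixed) recurrence argument and tail parameters and do not grow along the chain, so each $\tgtwo_s$ contributes at most $p(\sum_{k>i}\psize{\tgone}{k})$ new tier-$i$ vertices, and summing over $x\le\psize{\tgone}{j}$ of them still yields a polynomial in $\size{\tgone}$. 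Establishing and maintaining this fourth condition through the composition and recursion cases is the technical heart of the proof; without it the induction does not close.
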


In other words, every tiered recursive function is represented by a GRS which is
potentially infinite, but which is polytime presentable. Moreover, appropriate
polynomial bounds hold for the number of innermost rewriting steps necessary to
compute the normal form of term graphs and for the size of any intermediate results
produced during computation.

In the rest of this section, we will give a proof of Theorem~\ref{theo:mainresult}.
This will be a constructive proof, i.e. we define $\sgrone_\tfunone^\tiersone$ by induction
on the structure of $\tfunone$ as a tiered function (i.e. on the structure of the
proof that $\tfunone$ is a tiered function). Let $\sigone=(\{\conone_1,\ldots,\conone_t\},\arityone)$.
$\sgrone_\tfunone^\tiersone$ is defined as follows:
\begin{varitemize}
\item
  For every $\tiersone=(i)\rightarrow i$, $\sgrone_{\id}^\tiersone$ is a GRS whose only rule is
  \begin{displaymath}
    \xymatrix@R=15pt@C=8pt{
      *+[o][F]{\funid^\tiersone}\ar[d]\\
      *+[F]{\bot}\\
    }
  \end{displaymath} 
\item
  For every $\tiersone=(i,\ldots,i)\rightarrow i$, $\sgrone_{\constr{\conone_j}}^\tiersone$ is a GRS whose only rule is
  \begin{displaymath}
    \xymatrix@R=15pt@C=8pt{
      *+[o][F]{\funconstr{\conone_j}{\tiersone}}\ar[d]\ar[rrd] & &  *+[F]{\conone_j^{i}} \ar[lld]\ar[d]\\
      \bot & \cdots & \bot\\
    }
  \end{displaymath}
\item
  For every $\tiersone$, $\sgrone_{\proj{n}{m}}^\tiersone$ is a GRS whose only rule is
  \begin{displaymath}
    \xymatrix@R=.3pt@C=8pt{
       & & & *+[o][F]{\tproj{n}{m}^\tiersone}\ar[llldddddd]\ar[ldddddd]\ar[dddddd]\ar[rdddddd]\ar[rrrdddddd] & & & \\
       & & & & & & \\
       & & & & & & \\
       & & & & & & \\
       & & & & & & \\
       & & & & & & \\
      \bot & \cdots & \bot & *+[F]{\bot} & \bot & \cdots & \bot\\
      \ar@{-} `d^r[r]`_d[r] & & \ar@{-} `d_l[l]`^d[l] & & \ar@{-} `d^r[r]`_d[r] & & \ar@{-} `d_l[l]`^d[l] \\
       & & & & & & \\
       & & & & & & \\
       & & & & & & \\
       & \mbox{$m-1$ times} &  &  &  & \mbox{$n-m$ times} & \\
   }
  \end{displaymath}
\item
  Let $\tfunthree$ be $\comp{\tfunone,\tfuntwo_1,\ldots,\tfuntwo_n}$ and suppose
  $\tfunthree$ is tiered recursive with tiers $\tiersone$. Then 
  $\sgrone_{\tfunthree}^\tiersone$ is the GRS $\sgrone_{\tfunone}^\tierstwo\cup\sgrone_{\tfuntwo_1}^{\tiersthree_1}
  \cup\ldots\cup\sgrone_{\tfuntwo_n}^{\tiersthree_n}\cup\{\rrone\}$, where
  $\rrone$ is the rule
  \begin{displaymath}
    \xymatrix@R=15pt@C=8pt{
      & & & *+[F]{\funone^\tierstwo}\ar[ld]\ar[rd] & \\
      *+[o][F]{\funthree^\tiersone} \ar[rd]\ar[rrrd] & & \funtwo_1^{\tiersthree_1}\ar[ld]\ar[rd] & \cdots & \funtwo_n^{\tiersthree_n}\ar[llld]\ar[ld] \\
      & \bot & \cdots & \bot & \\
    }    
  \end{displaymath}
  and $\tierstwo,\tiersthree_1,\ldots,\tiersthree_n$ are the tiers of
  $\tfunone,\tfuntwo_1,\ldots,\tfuntwo_n$, respectively.
\item
  Let $\tfunthree:\terms{\sigone}^{n+1}\rightarrow\terms{\sigone}$ be $\rec{\tfuntwo_1,\ldots,\tfuntwo_t}$ and
  suppose $\tfunthree$ is tiered recursive with tiers $\tiersone=(j,i_1,\ldots,i_n)\rightarrow i$. Let
  $\tierstwo_1,\ldots,\tierstwo_t$ be the tiers of $\tfuntwo_1,\ldots,\tfuntwo_t$, respectively. Moreover, let
  $\sigtwo=(\{\funtwo_1^{\tierstwo_1},\ldots,\funtwo_t^{\tierstwo_t}\},\aritytwo)$, where 
  $\aritytwo(\funtwo_i^{\tierstwo_i})=2\arityone(\funone_i)+n$.
  Then, $\sgrone_{\tfunthree}^\tiersone$ is the GRS $\sgrone_{\tfuntwo_1}^{\tierstwo_1}\cup\ldots\cup
  \sgrone_{\tfuntwo_t}^{\tierstwo_t}\cup\sgrtwo$
  where $\sgrtwo$ is the set of \emph{all} unfolding graph rewriting rules for $\ptiered{\sigone}{j}$, 
  $\sigtwo$ and $\funthree^\tiersone$.
\item
  Let $\tfunthree$ be $\cond{\tfuntwo_1,\ldots,\tfuntwo_t}$ and suppose $\tfunthree$ is tiered recursive with tiers 
  $\tiersone=(j,i_1,\ldots,i_n)\rightarrow i$. Then 
  $\sgrone_{\tfunthree}^\tiersone$ is the GRS $\sgrone_{\tfuntwo_1}^{\tierstwo_1}\cup\ldots\cup
  \sgrone_{\tfuntwo_t}^{\tierstwo_t}\cup\{\rrone_1,\ldots,\rrone_t\}$, where
  $\rrone_k$ is the rule
  \begin{displaymath}
    \xymatrix@R=15pt@C=8pt{
      & & *+[o][F]{\funthree^\tiersone} \ar[ld]\ar[rdd]\ar[rrrdd] & & *+[F]{\funtwo_k^{\tierstwo_k}}\ar[lllldd]\ar[lldd]\ar[ldd]\ar[rdd]  & \\
      & \conone_k^j \ar[ld]\ar[rd] & & & & \\
      \bot & \cdots & \bot & \bot & \cdots & \bot \\
     }    
  \end{displaymath}
  and $\tierstwo_1,\ldots,\tierstwo_t$ are the tiers of $\tfuntwo_1,\ldots,\tfuntwo_t$, respectively.
\end{varitemize}
The \emph{extensional} soundness of the above encoding can be verified relatively easily. More interesting, and difficult,
is the study of its complexity properties.

Theorem~\ref{theo:mainresult} is a direct consequence of the following:
\begin{proposition}\label{prop:inductionload}
Suppose $\tfunone:\terms{\sigone}^{n}\rightarrow\terms{\sigone}$ is tiered recursive with tiers
$\tiersone=(i_1,\ldots,i_n)\rightarrow i$ and let $\sgrone_{\tfunone}$ be the GRS on $\sigtwo\supsig\tiered{\sigone}$
defined as above. Then there is a polynomial with natural coefficients $p:\N\rightarrow\N$ such that
for every proper context $\contone$, for every $\subone$ and for every
proper term graph $\tgone$ such that $\GtoT{\tgone}=\funone^\tiersone(\termone_1^{i_1},\ldots,\termone_n^{i_n})$, it holds
that
$\subst{\contone}{\subone}{\tgone}\rewrgraphcbv\subst{\contone}{\subtwo_1}{\tgtwo_1}
\rewrgraphcbv\ldots\rewrgraphcbv\subst{\contone}{\subtwo_m}{\tgtwo_m}$
where: 
\begin{varenumerate}
\item
  $\GtoT{\tgtwo_m}=(\tfunone(\termone_1,\ldots,\termone_n))^i$;
%\item\label{point:second}
%  For every $\verone_1,\ldots,\verone_s$ such that $\subone(\verone_i)$ is not the root of $\tgone$
%  and for every $j$, $\subgr{\tgone}{\subone(\verone_1),\ldots,\subone(\verone_s)}
%  \cong\subgr{\tgtwo_j}{\subtwo_j(\verone_1),\ldots,\subtwo_j(\verone_s)}$.
\item
  $m\leq p(\size{\tgone})$;
\item
  $\size{\tgtwo_j}\leq p(\size{\tgone})$ for every $j$;
\item
  $\psize{\tgtwo_m}{i}\leq\psize{\tgone}{i}+p(\sum_{k=i+1}^{\infty}\psize{\tgone}{k})$.
%\item
%  $\psize{\tgtwo_m}{k}\leq\psize{\tgone}{k}$ for every $k\neq i$.
\end{varenumerate}
\end{proposition}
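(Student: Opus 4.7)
The proof is by induction on the derivation that $\tfunone$ is tiered recursive with tiers $\tiersone$, following the six clauses of Section~\ref{Sect:GenTieredRec}. For each clause I produce a polynomial $p$ witnessing the four conditions, built by summation and polynomial composition from the polynomials given by the induction hypothesis. Polytime presentability of $\sgrone_{\tfunone}^\tiersone$ is verified in parallel, by noting that at every clause the added rules can be constructed in polynomial time from the given input term graph.

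The base cases (identity, constructor, projection) are immediate: $\sgrone_{\tfunone}^\tiersone$ contains a single rule matching at the root of $\tgone$, so $\subst{\contone}{\subone}{\tgone}\rewrgraphcbv\subst{\contone}{\subtwo_1}{\tgtwo_1}$ in one step, with $\size{\tgtwo_1}\leq\size{\tgone}$ and $\psize{\tgtwo_1}{i}\leq\psize{\tgone}{i}+1$. The composition and conditional cases are only marginally harder: a single ``top-level'' rule $\rrone$ fires, producing a graph in which certain subgraphs are headed by calls to $\tfuntwo_k^{\tiersthree_k}$ (and, for composition, by a further outer call to $\tfunone^\tierstwo$). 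Because reduction is innermost, the induction hypothesis applies to each inner call in turn, invoking Lemma~\ref{lemma:iso} to lift its statement inside the ambient context $\contone$; the outermost call, if any, is then reduced by one further use of the induction hypothesis. Since polynomials are closed under sum, product and composition, the four bounds combine cleanly.

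The recursion case is the crux. Among the infinitely many unfolding rules of $\sgrone_{\tfunone}^\tiersone$, exactly one has labelled-graph component isomorphic to the recurrence argument of $\tgone$, namely the maximal $\ptiered{\sigone}{j}$-labelled subgraph rooted at the appropriate child of the $\funone^\tiersone$-root. A single firing of this rule expands the recurrence argument into a DAG-shaped configuration in which every vertex $\verone$ of the recurrence argument is paired with a fresh call-vertex $\funtwo_{k(\verone)}^{\tierstwo_{k(\verone)}}$; the inputs of that call are $\verone$'s tier-$j$ subterms, the call-vertices sitting at $\verone$'s children (whose results will carry tier $i$ once reduced), and the $n$ parameters at their respective tiers. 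Innermost reduction then evaluates the $\psize{\tgone}{j}$ many call-vertices bottom-up, each being discharged by the induction hypothesis on the corresponding $\tfuntwo_k$.

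The main obstacle is establishing condition~4 in the recursion case. By the induction hypothesis on $\tfuntwo_k$, each individual call contributes to the tier-$i$ count at most its own tier-$i$ input size plus $q(\cdot)$ applied to its higher-tier input size, for a polynomial $q$ independent of the call. Because the recurrence argument has tier $j>i$, all its $\psize{\tgone}{j}$ vertices count toward $\sum_{k>i}\psize{\tgone}{k}$, and the higher-tier input to any call is bounded by $\size{\tgone}$, itself polynomial in that same sum. Summing across the $\psize{\tgone}{j}$ calls yields a total new tier-$i$ contribution of at most $\psize{\tgone}{j}\cdot q(\size{\tgone})$, still polynomial in $\sum_{k>i}\psize{\tgone}{k}$, as required. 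Sharing is what makes the argument go through: tier-$i$ subgraphs common to several calls --- notably the shared parameters and each recursive result reused by a parent --- are counted once only, blocking the exponential blow-up that a tree representation would suffer and providing the whole rationale for working with term graphs.
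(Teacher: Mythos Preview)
Your overall approach is the paper's: structural induction on the tiered-recursion derivation, with the recursion case dispatched by firing a single unfolding rule and then evaluating the resulting $\psize{\tgone}{j}$ call-vertices bottom-up under innermost reduction, relying on sharing to avoid blow-up. The base, composition, and conditional cases are as in the paper (though the paper spells out composition in considerably more detail, tracking how the tier-$j_s$ results of the inner calls feed into the outer one).

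There is, however, a genuine gap in your argument for condition~4 in the recursion case. You bound the higher-tier input to each $\tfuntwo_k$-call by $\size{\tgone}$ and then write that this quantity is ``itself polynomial in that same sum'', i.e.\ in $\sum_{k>i}\psize{\tgone}{k}$; you then conclude that the total new tier-$i$ contribution $\psize{\tgone}{j}\cdot q(\size{\tgone})$ is polynomial in $\sum_{k>i}\psize{\tgone}{k}$. But $\size{\tgone}$ is \emph{not} bounded by any polynomial in $\sum_{k>i}\psize{\tgone}{k}$: the side parameters may carry tiers $i_l\leq i$ (already in $\mathsf{sum}$ one has $i_1=i$), so $\psize{\tgone}{i}$ --- and hence $\size{\tgone}$ --- can be arbitrarily large relative to $\sum_{k>i}\psize{\tgone}{k}$. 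Your bound $\psize{\tgone}{j}\cdot q(\size{\tgone})$ therefore does not yield condition~4. The repair, which is exactly what the paper does, is to note that the \emph{higher-tier} input to any individual call consists only of tier-$j$ subterms of the recurrence argument together with those parameters of tier $>i$; the recursive-call results land at tier $i$ and so do not enter the higher-tier count. All of this higher-tier material is inherited unchanged from $\tgone$, hence the higher-tier input to each call is at most $\sum_{k>i}\psize{\tgone}{k}$, the per-call tier-$i$ increment is at most $q\bigl(\sum_{k>i}\psize{\tgone}{k}\bigr)$, and summing over $\psize{\tgone}{j}\leq\sum_{k>i}\psize{\tgone}{k}$ calls now legitimately gives a polynomial in $\sum_{k>i}\psize{\tgone}{k}$.
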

\begin{proof}
By induction on the structure of $\tfunone$ as a tiered function. In this proof, we use notations
like $\tgthree\rightarrow\tgfour$, meaning there is \emph{at least} an arc from $\tgthree$ to $\tgfour$,
$\tgthree\Rightarrow\tgfour$, meaning there is \emph{some} (but possibly zero) arcs 
from $\tgthree$ to $\tgfour$ and $\tgthree\dashrightarrow\tgfour$, meaning all the vertices 
in $\tgfour$ are reachable from $\tgthree$. Tiering information is omitted whenever possible, 
e.g., $\funone$ often takes the place of $\funone^\tiersone$. 
\condinc{}{We only give the most interesting inductive cases.}
\begin{varitemize}
\condinc{\item
  If $\tfunone=\id$, then $n=1$, $i_1=i$ and 
  $\subst{\contone}{\subone}{\tgone}\rewrgraphcbv\subst{\contone}{\subtwo_1}{\tgtwo_1}$,
  where $\GtoT{\tgtwo_1}=\termone_1^i=(\id(\termone_1))^i$. Moreover, $\psize{\tgtwo_1}{k}\leq\psize{\tgone}{k}$
  for every $k$. As a consequence, $m$ can be taken to be $1$ and $p$ can be taken to be
  $x\mapsto x+1$.
\item
  If $\tfunone=\constr{\conone_j}{i}$, then $i_k=i$ for every $k$ and
  $\subst{\contone}{\subone}{\tgone}\rewrgraphcbv\subst{\contone}{\subtwo_1}{\tgtwo_1}$,
  where $\GtoT{\tgtwo_1}=\conone^i_j(\termone_1^i,\ldots,\termone_n^i)=(\constr{\conone_j}{i}(\termone_1,\ldots,\termone_n))^i$.
  Moreover, $\psize{\tgtwo_1}{k}\leq\psize{\tgone}{k}$ for every $k\neq i$, while
  $\psize{\tgtwo_1}{i}\leq\psize{\tgone}{i}+1$.
  As a consequence, $m$ can be taken to be $1$ and $p$ can be taken to be
  the $x\mapsto x+1$.
\item
  If $\funone=\proj{n}{m}$ then $i_m=i$, and we have 
  $\subst{\contone}{\subone}{\tgone}\rewrgraphcbv\subst{\contone}{\subtwo_1}{\tgtwo_1}$ 
  where $\GtoT{\tgtwo_1}=(t_{m})^{i_m}=(\proj{n}{m}(\termone_1,\ldots,\termone_n))^{i_{m}}=(\proj{n}{m}(\termone_1,\ldots,\termone_n))^{i}$.
  We have that  $\psize{\tgtwo_1}{k}\leq\psize{\tgone}{k}$ for every $k$, then we take $p$ and $q$ respectively as $x\mapsto x$ and $x\mapsto 0$.}{}
\item
  Suppose $\tfunone=\comp{\tfuntwo,\tfunthree_1,\ldots,\tfunthree_k}$
  where $\tfuntwo$ is tiered recursive with tiers
  $(j_1,\ldots,j_k)\rightarrow i$ and $\tfunthree_h$ is tiered
  recursive with tiers $(i_1,\ldots,i_n)\rightarrow j_h$. Let
  $p_\tfuntwo,p_{\tfunthree_1},\ldots,p_{\tfunthree_k}$ be some polynomials
  satisfying the properties above, whose existence follows from
  the inductive hypothesis.
  We can write $\tgone$ as $\funone\rightarrow\tgthree$ and so 
  we start from
  \begin{displaymath}
    \xymatrix@R=15pt@C=15pt{
      \contone \ar[r]\ar@/_1.2pc/@{=>}[rr] & \funone\ar[r] & \tgthree \\
    }
  \end{displaymath}
  In one rewriting step the graph becomes
  \begin{displaymath}
    \xymatrix@R=10pt@C=15pt{
%      & & & & & \\
      \contone \ar[r]\ar@/_3pc/@{=>}[rrrrr] & \funtwo\ar[r]\ar@/^1.3pc/[rrr] & 
      \funthree_1\ar@/_1.3pc/@{.>}[rrr] & \cdots & \funthree_k\ar@{.>}[r] & \tgthree \\
      & & & & & \\
      & & & & & \\
    }
  \end{displaymath}
  After some $m_1$ rewriting steps, we get to 
  \begin{displaymath}
    \xymatrix@R=9pt@C=15pt{
      & & & & & & \\
      \contone \ar[r]\ar@/_4pc/@{=>}[rrrrrr] & \funtwo\ar[r]\ar@/^.7pc/[rr]\ar@/^1.6pc/[rrrr] & 
      \tgfour_1\ar@/_2pc/@{=>}[rrrr] & \funthree_2 \ar@/_1.3pc/@{.>}[rrr] & \cdots & \funthree_k\ar@{.>}[r] & \tgthree \\
      & & & & & & \\
      & & & & & & \\
      & & & & & & \\
    }
  \end{displaymath}
  %where $\GtoT{\tgfour_1\Rightarrow\tgthree}=(\tfunthree_1(\termone_1,\ldots,\termone_n))^{j_1}=\termtwo_1^{j_1}$.
  By the induction hypothesis, the pointers coming from $\contone$ remain unaltered.
  Notice that $m_1\leq p_{\tfunthree_1}(\size{\tgthree}+1)\leq p_{\tfunthree_1}(\size{\tgone})$. Moreover, 
  $\tgfour_1$ can only contain vertices labelled with $\conone_k^{j_1}$, because the whole graph is still 
  proper. Then, by the inductive hypothesis,
  $$
  \psize{\tgfour_1}{i}\leq p_{\tfunthree_1}(\sum_{s=j_1+1}^{\infty}\psize{\tgone}{s})
  $$
  The size of any intermediate graph produced in these $m_1$ steps (not considering $\size{\contone}$)
  is $k+p_{\tfunthree_1}(\size{\tgthree}+1)\leq k+p_{\tfunthree_1}(\size{\tgone})$.
  Likewise, after $m_2+\ldots+m_{k-1}$ rewriting steps, again by induction hypothesis, we get to
  \begin{displaymath}
    \xymatrix@R=8pt@C=15pt{
      & & & & & & \\
      \contone \ar[r]\ar@/_4pc/@{=>}[rrrrrr] & \funtwo\ar[r]\ar@/^.9pc/[rrr]\ar@/^1.9pc/[rrrr] & 
      \tgfour_1\ar@/_2pc/@{=>}[rrrr] & \cdots & \tgfour_{k-1} \ar@/_.9pc/@{=>}[rr] & \funthree_k\ar@{.>}[r] & \tgthree \\
      & & & & & & \\
      & & & & & & \\
      & & & & & & \\
    }
  \end{displaymath}
  where for every $s$, %$\GtoT{\tgfour_s\Rightarrow\tgthree}=(\tfunthree_s(\termone_1,\ldots,\termone_n))^{j_s}=\termtwo_s^{j_s}$, 
  $m_s\leq p_{\tfunthree_s}(\size{\tgthree}+1)\leq p_{\tfunthree_s}(\size{\tgone})$, 
  $\psize{\tgfour_s}{t}=0$ whenever $t\neq j_s$ and 
  $$
  \psize{\tgfour_s}{j_s}\leq p_{\tfunthree_s}(\sum_{s=j_s+1}^{\infty}\psize{\tgone}{s})
  $$
  Moreover, the size of any intermediate graph produced in the $m_s$ steps (not considering $\size{\contone}$)
  is at most
  $$
  k+\sum_{r=1}^s p_{\tfunthree_r}(\size{\tgone}).
  $$
  After $m_k$ steps, we reach
  \begin{displaymath}
    \xymatrix@R=10pt@C=15pt{
      & & & & & \\
      \contone \ar[r]\ar@/_3pc/@{=>}[rrrrr] & \funtwo\ar[r]\ar@/^1.3pc/[rrr] & 
      \tgfour_1\ar@/_1.3pc/@{=>}[rrr] & \cdots & \tgfour_k\ar@{=>}[r] & \tgfive \\
      & & & & & \\
      & & & & & \\
    }
  \end{displaymath}
  %$\GtoT{\tgfour_t\Rightarrow\tgfive}=(\tfunthree_k(\termone_1,\ldots,\termone_n))^{j_k}=\termtwo_k^{j_k}$.
  Again, $m_k\leq p_{\tfunthree_k}(\size{\tgthree}+1)\leq p_{\tfunthree_k}(\size{\tgone})$,
  $\psize{\tgfour_k}{t}= 0$ whenever $t\neq j_k$ and
  $$
  \psize{\tgfour_k}{j_k}\leq p_{\tfunthree_k}(\sum_{s=j_k+1}^{\infty}\psize{\tgone}{s})
  $$
  This time, however, we cannot
  claim that $\tgthree$ remains unchanged. Indeed, it's replaced by $\tgfive$, which
  anyway only contains a subset of the vertices of $\tgthree$. 
  As usual, the size of any intermediate result is at most
  $$
  k+\sum_{r=1}^k p_{\tfunthree_k}(\size{\tgone}).
  $$
  The graph above can be written as follows:
  \begin{displaymath}
    \xymatrix@R=5pt@C=15pt{
      & & & & & \\
      \contone \ar[r]\ar@/_1pc/@{=>}[rr] & \funtwo\ar[r]\ar@/^1.3pc/[r] & \tgsix \\
      & & & & & \\
%      & & & & & \\
    }
  \end{displaymath}
  where $\tgsix$ is a graph such that
  $$
  \psize{\tgsix}{t}\leq r(\sum_{s=t+1}^\infty\psize{\tgone}{s})+\psize{\tgone}{t}
  $$
  for every $t$ and $r$ is a fixed polynomial not depending on $\tgone$.
  Finally, after $l$ steps, we get to $\contone\Rightarrow\tgseven$,
  where $\GtoT{\tgseven}=\tfuntwo(\termtwo_1,\ldots,\termtwo_k)^i$
  and the pointers coming from $\contone$ remain unaltered. Moreover:
  \begin{align*}
    l&\leq p_{\tfuntwo}(\size{\tgsix})\leq p_{\tfuntwo}(1+\sum_{t=1}^\infty\psize{\tgsix}{t})
      \leq p_{\tfuntwo}(1+\sum_{t=1}^\infty(r(\sum_{s=t+1}^\infty\psize{\tgone}{s})+\psize{\tgone}{t}))\\
      &\leq p_{\tfuntwo}(r(1+\sum_{t=1}^\infty\sum_{s=t}^\infty\psize{\tgone}{s}))
      \leq p_{\tfuntwo}(r(\max\{i_1,\ldots,i_n\}\size{\tgone}))=q(\size{\tgone})
  \end{align*}
  where $q$ is a polynomial. By induction hypothesis,
  \begin{align*}
    \psize{\tgseven}{i}&\leq p_{\tfuntwo}(\sum_{s=i+1}^\infty\psize{\tgsix}{s})+\psize{\tgsix}{i}
       \leq p_{\tfuntwo}(\sum_{s=i+1}^\infty  (r(\sum_{t=s+1}^\infty\psize{\tgone}{t})+\psize{\tgone}{s}))+\psize{\tgsix}{i}
       \leq p_{\tfuntwo}(\sum_{s=i+1}^\infty  (r(\sum_{t=s}^\infty\psize{\tgone}{t})))+\psize{\tgsix}{i}\\
       &\leq p_{\tfuntwo}(r(\sum_{s=i+1}^\infty\sum_{t=s}^\infty\psize{\tgone}{t}))+\psize{\tgsix}{i}
        \leq p_{\tfuntwo}(r(\max\{i_1,\ldots,i_n\}(\sum_{s=i+1}^\infty\psize{\tgone}{s})))+\psize{\tgsix}{i}\\
       &\leq p_{\tfuntwo}(r(\max\{i_1,\ldots,i_n\}(\sum_{s=i+1}^\infty\psize{\tgone}{s})))+r(\sum_{s=i+1}^\infty\psize{\tgone}{s})+\psize{\tgone}{i}
        =z(\sum_{s=i+1}^\infty\psize{\tgone}{s})+\psize{\tgone}{i}
  \end{align*}
  where $z$ is a polynomial. The size of intermediate results is itself bound by $q(\size{\tgone})$. We can choose
  $p_\tfunthree$ to be just $q+z$.
\item
  Suppose $\tfunone=\rec{\tfuntwo_1,\ldots,\tfuntwo_t}$
  where $\tfuntwo_i$ is tiered recursive with tiers
  $(i,\ldots,i,j,\ldots,j,i_1,\ldots,i_n)\rightarrow i$. Let
  $p_{\tfuntwo_1},\ldots,p_{\tfuntwo_n}$ be some polynomials
  satisfying the properties above, whose existence follows from
  the inductive hypothesis.
  We can write $\tgone$ as $\funone\Rightarrow\tgthree$ and so 
  we start from
  \begin{displaymath}
    \xymatrix@R=15pt@C=15pt{
      \contone \ar[r]\ar@/_1.2pc/@{=>}[rr] & \funone\ar@{=>}[r] & \tgthree \\
    }
  \end{displaymath}
  In one (unfolding) rewriting step the graph becomes
  \begin{displaymath}
    \xymatrix@R=10pt@C=15pt{
      & & & & & \\
      \contone \ar[r]\ar@{=>}@/_1pc/[rrrd] & \funtwo_{s_1}\ar@{=>}[r]\ar@{=>}@/^1.5pc/[rrr]\ar@{=>}@/^2pc/[rrrr]\ar@{.>}[rrd] & 
      \funtwo_{s_2}\ar@{=>}@/^.7pc/[rr]\ar@{=>}@/^1.5pc/[rrr]\ar@{=>}[rd] & \cdots & 
      \funtwo_{s_{x-1}}\ar@{=>}[r]\ar@{=>}[ld] & \funtwo_{s_x}\ar@{=>}[lld] \\
      & & & \tgfour & & \\
%      & & & & & \\
    }
  \end{displaymath}
  where:
  \begin{varitemize}
  \item
    $\conone_{s_1},\ldots,\conone_{s_x}$ are the vertices of $\tgthree$ reachable from
    $\funone$ by following its leftmost outgoing arc, ordered topologically;
  \item
    $x\leq\psize{\tgthree}{j}\leq\psize{\tgone}{j}$
  \item
    $\psize{\tgfour}{t}\leq\psize{\tgthree}{t}\leq\psize{\tgone}{t}$ for every $t$.
  \end{varitemize}
  In $m_x+m_{x-1}+\ldots+m_2$ rewriting steps, the graph becomes
  \begin{displaymath}
    \xymatrix@R=10pt@C=15pt{
      & & & & & \\
      \contone \ar[r]\ar@{=>}@/_1pc/[rrrd] & \funtwo_{s_1}\ar@{=>}[r]\ar@{=>}@/^1.5pc/[rrr]\ar@{=>}@/^2pc/[rrrr]\ar@{.>}[rrd] & 
      \tgtwo_{2}\ar@{=>}@/^.7pc/[rr]\ar@{=>}@/^1.5pc/[rrr]\ar@{=>}[rd] & \cdots & 
      \tgtwo_{x-1}\ar@{=>}[r]\ar@{=>}[ld] & \tgtwo_x\ar@{=>}[lld] \\
      & & & \tgfour & & \\
%      & & & & & \\
    }
  \end{displaymath}
  Finally, in $m_1$ rewriting steps, we get to
  \begin{displaymath}
    \xymatrix@R=10pt@C=15pt{
      & & & & & \\
      \contone \ar[r]\ar@{=>}@/_1pc/[rrrd] & \tgtwo_{1}\ar@{=>}[r]\ar@{=>}@/^1.5pc/[rrr]\ar@{=>}@/^2pc/[rrrr]\ar@{=>}[rrd] & 
      \tgtwo_{2}\ar@{=>}@/^.7pc/[rr]\ar@{=>}@/^1.5pc/[rrr]\ar@{=>}[rd] & \cdots & 
      \tgtwo_{x-1}\ar@{=>}[r]\ar@{=>}[ld] & \tgtwo_x\ar@{=>}[lld] \\
      & & & \tgfive & & \\
%      & & & & & \\
    }
  \end{displaymath}
  where $\psize{\tgfive}{t}\leq\psize{\tgfour}{t}$. 
  The graph above can be written as follows
  \begin{displaymath}
    \xymatrix@R=10pt@C=15pt{
       \contone \ar[r]\ar@/_1pc/@{=>}[r] & \tgsix \\
     }
  \end{displaymath}
  Reasoning exactly as in the previous inductive case, we can get
  the following bounds for every $1\leq s\leq x$:
  \begin{align*}
    \psize{\tgtwo_s}{i}&\leq p_{i_s}(\sum_{t=i+1}^\infty\psize{\tgfour}{t})\leq p_{i_s}(\sum_{t=i+1}^\infty\psize{\tgone}{t})\leq p_{i_s}(\size{\tgone})\\
    \psize{\tgtwo_s}{t}&=0\mbox{ whenever $t\neq i$}\\
    \psize{\tgsix}{i}&=\sum_{s=1}^x\psize{\tgtwo_s}{i}+\psize{\tgfive}{i}
        \leq\sum_{s=1}^x p_{i_s}(\sum_{t=i+1}^\infty\psize{\tgone}{t})+\psize{\tgone}{i}\\
     m_s&\leq p_{i_s}(\size{\tgfour}+1+\sum_{t=s+1}^{x}\size{\tgtwo_t})
        \leq p_{i_s}(\size{\tgone}+1+\sum_{t=s+1}^{x}p_{i_t}(\size{\tgone}))
   \end{align*}
  A bound for the size of the intermediate values produced in the
  any of the $k$ groups of steps can be obtained analogously. The thesis
  follows.
\condinc{
\item
  \begin{sloppypar}
  Suppose  $\tfunone=\cond{\tfuntwo_1,\ldots,\tfuntwo_t}:\terms{\sigone}^{n+1}\rightarrow\terms{\sigone}$ where each $\tfuntwo_i$ 
  is tiered recursive with tiers $(j,\ldots, j, i_1, \ldots, i_n)\rightarrow i$. Let  $p_{\tfuntwo_1},\ldots,p_{\tfuntwo_t}$ 
  be some polynomials satisfying the properties above, whose existence follows from the inductive hypothesis.
  Given the initial graph
  \end{sloppypar}
  \begin{displaymath}
  \xymatrix@R=15pt@C=15pt{
    \contone \ar[r]\ar@/_1pc/@{=>}[rr] & \funone\ar[r]\ar@/_1pc/@{=>}[rr] & \conone_s\ar@{=>}[r] & \tgthree \\
  }
  \end{displaymath}
  if we apply the rules $\rho_{s}$, we obtain a graph
  \begin{displaymath}
  \xymatrix@R=15pt@C=15pt{
    \contone \ar@{=>}[r]\ar@/_1pc/[rr] & \conone_s\ar@/_1pc/@{=>}[rr] & \funtwo_s\ar@{=>}[r] & \tgthree \\
  }
  \end{displaymath}
  and the induction hypothesis applied to $\tfuntwo_s$ easily leads to the thesis.}{}
\end{varitemize}
This concludes the proof.
\end{proof}
Theorem~\ref{theo:mainresult} follows easily from Proposition~\ref{prop:inductionload}, once
we observe that any $\sgrone_\tfunone $ is polytime presentable. Indeed, even
if $\sgrone_\tfunone$ is infinite whenever $\tfunone$ is defined by tiered recursion, the rules
in $\sgrone_\tfunone$ are very ``regular'' and an algorithm $\mathcal{A}_{\sgrone_\tfunone}$ with
the required properties can be defined naturally.

%%%%%%%%%%%%%%%%%%%%%%%%%
\section{Conclusions}\label{Sec:Conclusion}
%%%%%%%%%%%%%%%%%%%%%%%%%

We proved that Leivant's characterization of polynomial time functions  holds for any free algebra (the original 
result was proved only for algebras with unary constructors, i.e.\  for {word algebras}).
The representation of the terms of the algebras as term graphs permits to avoid 
uncontrolled duplication of shared subterms, thus preserving polynomial bounds.

The main contribution of the paper is the implementation of tiered recursion via term graph 
rewriting. The proofs of the related theorems and propositions  are non-trivial. We 
introduce graph unfolding rules and graph contexts, in order to implement recursion efficiently
and to prove inductively our main result. 
Moreover, the result is given on infinite graph rewriting systems --
in presence of an infinite set of rewriting rules, some well-known computability 
results are lost.

%This work solve an interesting open question about polytime characterization in the setting of 
%tiered recursion, and we think that the technical instrument used in the proofs of the presented 
%results can be useful for future investigations about ICC problems.

\condinc{\bibliographystyle{plain}}{\bibliographystyle{plain}}
\bibliography{biblio}
\end{document}